\definecolor{mycyan}{cmyk}{.3,0,0,0}
\DeclareMathOperator*{\argmin}{argmin}
\DeclareMathOperator*{\argmine}{arg\, }
\newtheorem{thm}{Theorem}
\newtheorem{lem}{Lemma}
\newtheorem{prop}{Proposition}
\newtheorem{rem}{Remark}
\begin{document}
\title{Beyond Empirical Models: Pattern Formation Driven Placement of UAV Base Stations}

\author{Jiaxun~Lu, Shuo~Wan, Xuhong~Chen, Zhengchuan~Chen,~\IEEEmembership{Member, ~IEEE}, Pingyi~Fan,~\IEEEmembership{Senior Member, ~IEEE}, Khaled~B.~Letaief,~\IEEEmembership{Fellow, ~IEEE}\\

\thanks{
Jiaxun~Lu, Shuo~Wan, Xuhong~Chen and Pingyi Fan are with Tsinghua National Laboratory for Information Science and Technology(TNList) and the Department of Electronic Engineering, Tsinghua University, Beijing, P.R. China, 100084. E-mail: \{lujx14,~s-wan13,~chenxh13\}@mails.tsinghua.edu.cn, fpy@tsinghua.edu.cn.}
\thanks{Zhengchuan Chen is with the College of Communication Engineering, Chongqing University, Chongqing, P.R. China, 400044. E-mail: chenzc.ee@gmail.com.} 
\thanks{Khaled B. Letaief is with the Department of Electrical and Computer Engineering, HKUST, Hong Kong (email: eekhaled@ust.hk) and Hamad bin Khalifa University, Qatar (email: kletaief@hbku.edu.qa).}
}

\maketitle

\graphicspath{{Figures/}}

\begin{abstract}

This work considers the placement of unmanned aerial vehicle base stations (UAV-BSs) with criterion of minimum UAV-recall-frequency (UAV-RF), indicating the energy efficiency of mobile UAVs networks. Several different power consumptions, including signal transmit power, on-board circuit power and the power for UAVs mobility, and the ground user density are taken into account. Instead of conventional empirical stochastic models, this paper utilizes a pattern formation system to track the instable and non-ergodic time-varying nature of user density. We show that for a single time-slot, the optimal placement is achieved when the transmit power of UAV-BSs equals their on-board circuit power. Then, for multiple time-slot duration, we prove that the optimal placement updating problem is a nonlinear dynamic programming coupled with an integer linear programming. Since the original problem is NP-hard and can not be solved with conventional recursive methods, we propose a sequential-Markov-greedy-decision method to achieve near minimal UAV-RF in polynomial time. Further, we prove that the increment of UAV-RF caused by inaccurate predicted user density is proportional to the generalization error of learned patterns. Here, in regions with large area, high-rise buildings or low user density, large sample sets are required for effective pattern formation.

\end{abstract}

\begin{IEEEkeywords}
Aerial base-station, air-to-ground communication, time-varying user density, pattern formation, Pareto-optimality, sampling number, UAV deployment.
\end{IEEEkeywords}

\IEEEpeerreviewmaketitle

\section{Introduction}\label{Sec:Introduction}

Unmanned aerial vehicle base stations (UAV-BSs) have been considered  as a promising solution to provide wireless coverage in a rapid manner. In this system, UAVs are often powered by batteries \cite{gupta2016survey}, which limits its life-time. Here, UAV-BSs are usually recalled periodically and there is an urgent need to increase the energy-efficiency of UAVs systems. A general strategy to improve energy-efficiency is adjusting the placement of UAVs according to ground user density, working environment and desired transmit data rate etc.\cite{Mozaffari2016Efficient,kalantari2016number,mozaffari2016optimal}.

In the literatures, the placement of UAV-BSs has been studied under a broad range of aspects\cite{al2014optimal,bor2016efficient,lyu2017placement,Mozaffari2016Efficient,kalantari2016number,mozaffari2016optimal,alzenad20173d}. The optimal hovering altitude that maximizes coverage radius was discussed in \cite{al2014optimal}. Later, the authors in \cite{Mozaffari2016Efficient,alzenad20173d} investigated the optimal hovering altitude and coverage radius. In \cite{bor2016efficient}, the authors analyzed the optimal placement of UAV-BSs simultaneously maximizing the number of covered users and energy-efficiency. In \cite{kalantari2016number},\cite{mozaffari2016optimal} and \cite{lyu2017placement}, they discussed the relationships among the optimal placement, minimum required number of UAV-BSs, as well as the density of ground users. The main focus of these works is on minimizing transmit power. In fact, the on-board circuit power consumption, related to rotors, computational chips and gyroscopes etc., and the potential mobility power consumption of UAVs may also affect the life-time of network and should be taken into account in the viewpoint of system. Up to now, the works on energy-efficient placement of UAV-BSs considering on-board circuit power and mobility power are quite limited. A power model investigating the peculiar features of UAVs, like available energy, weight, maximum speed, etc. was formulated in \cite{di2015energy}. The on-board circuit power and mobility power of UAVs were addressed in \cite{lu2017energy} and \cite{mozaffari2017mobile}, separately. These works provide comprehensive analyses based on a time-invariant density of ground users. However, in practice, due to the directional shift of people consequences for transportation design or epidemic control etc., the human flow usually follows Turing pattern \cite{asllani2014theory}, which leads to time-varying, instable and non-ergodic density of ground users, raising some new challenges to the placement of UAV-BSs.

In conventional stable and ergodic cases, one can develop an empirical stochastic model for density of ground users (e.g., Markov modulated) where the statistical parameters can be estimated from data \cite{sharma2010optimal}. Such models may not be applied to the instable scenarios. Fortunately, the developing data-driven methods cast a bright light on challenges in wireless communication. For example, useful information extraction with Data-mining approach was utilized in \cite{bacstuug2015big} to enhance the caching performance of wireless networks. Machine-learning techniques have been applied to decision-making and feature classification in cognitive radio problems (see \cite{bkassiny2013survey} and the references therein). Some tutorial works on applying data-driven methods to the wireless domain were presented in \cite{alsheikh2014machine} and \cite{kulin2016data}. These works deliver some useful motivations to us. Recently, it was shown in \cite{dias2016survey} and \cite{lee2017deep} that using machine learning, it is possible to construct a more intelligent context-aware pattern formation system by predicting future situations as well as monitoring the current state.  

\begin{figure}[htbp]
\centering
\includegraphics[width=0.6\textwidth]{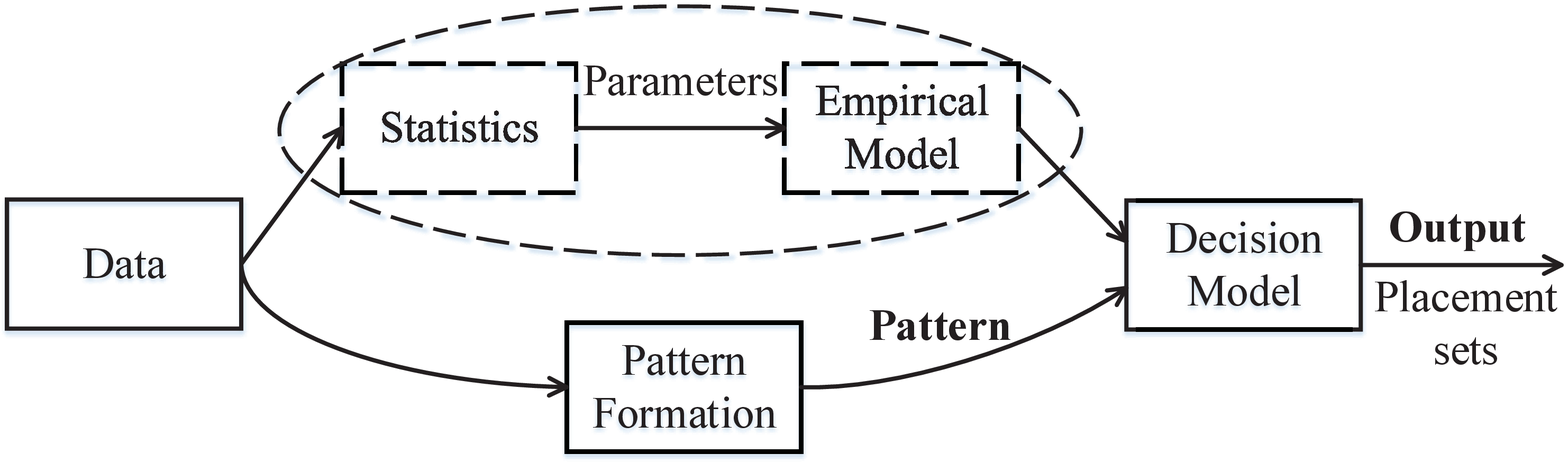}
\caption{Pattern Formation aided framework of our considered system.} \label{Fig:MLModel}{}
\end{figure}

As shown in Fig. \ref{Fig:MLModel}, we propose a framework, using a pattern formation module rather than a statistical empirical model, to track the instable and non-ergodic time-varying nature of ground user densities. Then, we address the importance of accurate pattern formation and consider the decision model on optimal placement that maximizes the life-time of the mobile UAVs network. In this case, the on-board circuit power and the potential mobility power of UAVs are also considered. This paper focuses on the downlink of UAV-BSs, in which each of the ground users is served with fixed data rate. In the system, the considered duration is partitioned into continuous time-slots. At the beginning of each time-slot, UAV-BSs are allowed to decide whether or not to update their placement according to ground user density. To our best knowledge, this paper is \emph{one of the first comprehensive studies on the joint optimal deployment of UAV-BSs and pattern formation in scenario with instable and non-ergodic time-varying density of ground users}. 

To characterize the life-time of mobile UAVs network, we employ the notion of  UAV-recall-frequency (UAV-RF), the frequency of the active UAVs run out of batteries, as the physical index. That is, maximizing life-time is equivalent to minimizing UAV-RF. In this direction, we first consider the optimal placement of UAV-BSs that minimizes UAV-RF in one time-slot and then extend the discussion to multi-slot duration. In fact, the UAV-RF in one time-slot can be treated as static UAV-RF. In this case, by analyzing the coverage scenario with one single UAV, we prove that the optimal hovering altitude minimizing transmit power is proportional to the coverage radius, and the slope is only determined by communication environment (high-rise urban, dense urban and urban, etc.), which is a general extension of previous results in \cite{al2014optimal} by considering the density of users inside the coverage of UAV-BSs. More specifically, in environment with high-rise buildings, the slope is large, and hence UAVs are supposed to fly higher compared with environment with low-rise buildings.

By applying the derived optimal hovering altitude, we investigate the static UAV-RF versus environment, coverage parameters and on-board circuit power, where coverage parameters include the coverage radius, user density and desired data rate. Analytical results demonstrate that: 1) The minimal static UAV-RF is achieved when transmit power equals the on-board circuit power; 2) The minimal static UAV-RF becomes large in scenarios with high-rise buildings, high on-board circuit power, and large user density and data rate. This indicates that limiting on-board circuit power can effectively prolong the life-time of mobile UAVs network. Compared with the optimal coverage radius given in \cite{alzenad20173d}, our results provide more insights on the design of UAVs networks by investigating the on-board circuit power.

For the multiple time-slot case, it requires to decide the optimal placement updating epochs of UAV-BSs in cases with instable time-varying density of ground users. The corresponding UAV-RF is denoted as dynamic UAV-RF, which is relevant to transmit power, on-board circuit power and potential mobility power of UAV-BSs. We show that the placement optimization problem is a multi-stage decision process and can be written as a nonlinear dynamic programming (NLDP) coupled with an inherent integer linear programming (ILP). For the inherent ILP, we propose a polynomial time solution by transforming it into a standard assignment problem with some amendments. However, for the NLDP, we show that it can not be solved with conventional methods, such as recursive manners, because the number of update epochs is unknown in advance.

Noticing the NP-hardness of NLDP, we shall design a sequential-Markov-greedy-decision (S-MGD) method to find the near-optimal solution by utilizing the notion of Pareto-optimality, which is proved to be with polynomial complexity. Simulation results show that our proposed S-MGD method can stably achieve near-optimal performance in terms of minimum dynamic UAV-RF. In particular, when mobility power is extremely low compared with on-board circuit power and transmit power, the proposed S-MGD method updates the placement of UAV-BSs at the beginning of each time-slots. By contrast, in cases with extremely high mobility power, UAV-BSs hold their placement during the considered duration. 

Finally, the relationships among sampling number, density pattern accuracy and increment of UAV-RF are characterized in detail. These results imply that in subregions with large area, high-rise buildings and low user density, large sample sets are needed for effective pattern formation and reducing UAV-RF. Specifically, we first prove that the increment of  UAV-RF caused by inaccurate density patterns is proportional to the generalization error. Then, we theoretically derive the minimum sampling number of each subregion with the Vapnik-Chervonenkis (VC) theorem\cite{vapnik2013nature}, where the overall increased UAV-RF is upper-bounded.

The rest of this paper is organized as follows. In Section \ref{Sec:SysModel}, the system model is introduced, and an available density pattern is presented as the test set in this paper. Then, the optimal placement of UAV-BSs minimizing static UAV-RF is discussed in Section \ref{Sec:StaticCoverage}. In Section \ref{Sec:DynamicCoverage}, the S-MGD based placement updating method is presented to minimize dynamic UAV-RF. Section \ref{sec:insight} analyzes the effects of the accuracy of pattern formation system. In Section \ref{Sec:Simulations}, the validity of previous theoretical results and the effectiveness of our proposed methods are verified by numerical results. Finally, conclusions are given in Section \ref{Sec:Conclusion}.

\section{Coverage Model, Air-to-Ground Channel and User Density Pattern}\label{Sec:SysModel}

In this section, we shall first illustrate the downlink coverage model of UAV-BSs, where instable time-varying user density is considered. Then, we introduce the air-to-ground (A2G) channel and show the existence of optimal hovering altitude of UAV-BSs. Finally, an available pattern of previous time-varying user density is presented as the test set in this paper.

\subsection{UAV-BS Coverage Model}\label{SubSec:UAVCoverageModel}

As shown in Fig. \ref{Fig:UAVCoverage}, a geographical area is divided into several subregions according to the municipal planning of modern cities and their different ground user density patterns\cite{xu2016understanding}. Consider a length $T$ duration, let the time-varying density of ground users be $\lambda_{\beta}\left(t\right)$, where $\beta\in\{1,2,\cdots,\kappa\}$ and $t\in[0,T]$ index subregion and time, respectively. $\kappa$ is the number of subregions. For the notation simplicity, unless otherwise stated, we drop the time index from the equations. To balance the number of active UAV-BSs between adjacent time-slots, a recall and supplement center (RSC) is deployed in considered area. 

\begin{figure}[htbp]
\centering
\includegraphics[width=0.48\textwidth]{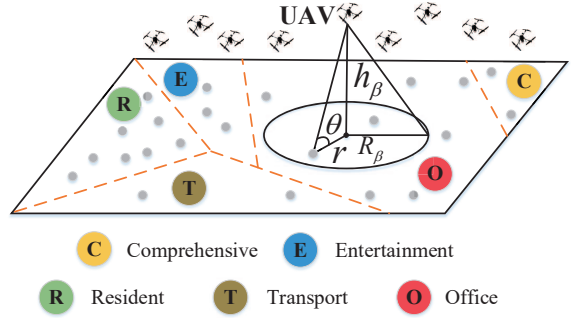}
\caption{A typical area classified into different subregions according to user density patterns and the UAV coverage model therein.} \label{Fig:UAVCoverage}
\end{figure}

We focus on the downlink of network in which UAV-BSs adopt a frequency division multiple access technique to serve each of the ground users with fixed data rate $C$. UAV-BSs assign individual frequency bands to mobile ground users, and hence the frequency interference between UAV-BSs is avoided. We assume that the transmit power of each UAV-BS and the available bandwidth are sufficient to meet the rate requirement of users. In this case, we consider a disk-covering model where UAV-BSs in the same subregion provide equal coverage radius\cite{mozaffari2016unmanned,lyu2017placement}. Here, considering the fact that the area of overlaps between adjacent disks are proportional to the area of disks, without loss of generality, we express the required number of UAV-BSs in the $\beta$-th subregion at $t$ as{\footnote{Minimizing required number of UAV-BSs can be formulated as the geometric disk cover problem\cite{srinivas2009construction}, and can be optimally solved by core-sets method\cite{fayed2013mixed}.}}
\begin{equation}\label{Equ:UAVNumber}
N_\beta(t) = \frac{S_\beta}{\pi R^2_{\beta}(t)},
\end{equation}
where $S_\beta$ is the area of the $\beta$-th subregion and $R_{\beta}(t)$ is the corresponding coverage radius.


\subsection{Air-to-Ground Channel}\label{SubSec:A2GChannel}

The A2G channel can be characterized into line-of-sight (LOS) link or non-line-of-sight (NLOS) link\cite{al2014optimal}, and the path loss therein can be given by
\begin{equation}\label{Equ:A2GPathLoss}
L_\xi(r,h_\beta) = \begin{cases}
\left( 4\pi f/c \right)^2 \left( r^2+h_\beta^2 \right)\,\eta_{0},& {\xi=0}\\
\left( 4\pi f/c \right)^2 \left( r^2+h_\beta^2 \right) \,\eta_{1},& {\xi=1},
\end{cases}
\end{equation}
where $\xi=0$ and $\xi=1$ denote LOS link and NLOS link, respectively. $f$ is the carrier frequency and $c$ is the traveling speed of light. $r\in[0,R_{\beta}]$ is the distance between the user of interest and the projection of UAV-BS on ground, and $h_\beta$ is the hovering altitude of UAV-BSs. $\eta_{0}$ and $\eta_{1}$ are the excessive path loss on the top of the free space path loss (FSPL) for LOS and NLOS links, determined by communication environment (suburban, urban, dense urban, high-rise urban or others). Typically, $\eta_{1}\gg\eta_{0}$ as the obstacles in propagation paths greatly enhance the path loss of NLOS link.

The average path loss of A2G channel is also determined by the LOS probability \cite{al2014optimal}
\begin{equation}\label{equ:ProbofLOSLink}
P_{0}(r, h_\beta) = \frac{1}{1 + a {\rm{exp}}(-b[\theta-a])},
\end{equation}
where $a$ and $b$ are constants determined by environment, and $\theta=\frac{180}{\pi} \tan^{-1}(h_\beta/r)$ is the elevation angle shown in Fig. \ref{Fig:UAVCoverage}. Then, $1-P_{0}(r, h)$ denotes the NLOS probability. The average path loss of A2G channel can be derived as
\begin{equation}\label{Equ:AveragePathLoss}
\begin{split}
\bar{L}(r, h_\beta) &= P_{0}(r, h_\beta)L_0(r, h_\beta) + \left(1-P_{0}(r, h_\beta)\right)L_1(r, h_\beta)\\
&=\underbrace{(4\pi f/c)^2 \left( r^2+h_\beta^2 \right)}_{\rm{FSPL}} \, \underbrace{\big( \eta_1 + P_{0}(r, h_\beta)(\eta_0-\eta_1) \big)}_{\rm{average\,excessive\,path\,loss}}.
\end{split}
\end{equation}
This clearly characterizes the individual effects of FSPL and average excessive path loss. The first part accounts for FSPL, which monotonically increases with $h_\beta$ due to the growing distance between UAV and user; the second part, which represents the average excessive path loss, is monotonically decreasing with $h_\beta$. This is because large $h_\beta$ leads to high LOS probability of A2G channel. For a specific coverage radius $R_{\beta}$, \eqref{Equ:AveragePathLoss} implies the existence of optimal hovering altitude minimizing average path loss of A2G channel. Besides, to minimize the UAV-RF of considered area, the optimal coverage radius also needs to be jointly considered.


\subsection{An Available Density Pattern as the Test Set}\label{sec:learnedDensityPattern}

The empirical average traffic amount of the $\beta$-th subregion can be reconstructed as $\{x_{\beta}^{\rm{r}}[n]\}$ by the inverse discrete Fourier transformation (IDFT)\cite{xu2016understanding}:
\begin{equation}\label{Equ:ReconstructTraffic}
x_{\beta}^{\rm{r}}[n] = \frac{\gamma_{\rm{r},\beta}}{N} \sum_{k=0}^{N-1} {X}_{\beta}^{\rm{r}}[k]\exp(2\pi j k n/N),
\end{equation}
where $n\in[0,N]$ is the sampling index, and with the sampling period $\mu=10$ min, the sampling number in 4 weeks is $N=4032$. $\gamma_{\rm{r},\beta}$ is the reconstruction scaling factor at the $\beta$-th subregion, and $X_{\beta}^{\rm{r}}[k]$ is the IDFT coefficient expressed by
\begin{equation}
{X}_{\beta}^{\rm{r}}[k] = 
\begin{cases}
{X}_{\beta}[k], & k = 0,4,28,56,N-4,N-28,N-56\\
0, & \text{otherwise}. 
\end{cases} 
\end{equation}
${X}_{\beta}[k]$ is the frequency-domain coefficients of DFT($x_{\beta}[n]$), where $x_{\beta}[n]$ is the sampled time-domain traffic amount in the $\beta$-th subregion. According to the properties of IDFT, ${X}_{\beta}[k] = {X}^{\dagger}_{\beta}[N-k]$ for the real time-domain traffic amount, where $(\cdot)^{\dagger}$ denotes the conjugate transpose of $(\cdot)$. Hence, only parts of the coefficients and the scaling factor $\gamma_{\rm{r},\beta}$ are listed in Table \ref{tab:ReconstructCoefficients}.

\begin{table}[htbp]
\small
 \caption{Coefficients of Reconstruction Function}\label{tab:ReconstructCoefficients}
 \centering
 \begin{tabular}{cccccc}
  \toprule
  Subregions& $\gamma_{\rm{r}}$ $(\times10^{11})$ & $k=0$ $(\times10^{-4})$ & $k=4$  & $k=28$  &  $k=56$\\
  \midrule
  \rowcolor{mycyan}\texttt{E} & $8.35$ &3.24 & $0.06e^{-0.3j}$ & $0.5e^{2.36j}$ & $0.08e^{0.69j}$\\
  
  \texttt{R} & $17.4$ &2.73 & $0.04e^{-1.02j}$ & $0.27e^{1.72j}$ & $0.17e^{1.35j}$\\
  
  \rowcolor{mycyan}\texttt{T} & $4.32$ &3.73 & $0.1e^{1.04j}$ & $0.38e^{2.53j}$ & $0.28e^{2.46j}$\\
  
  \texttt{O} & $5.23$ &4.63 & $0.21e^{1.21j}$ & $0.56e^{2.52j}$ & $0.2e^{0.29j}$\\
  
  \rowcolor{mycyan}\texttt{C} & $17.4$ &2.85 & $0.04e^{0.35j}$ & $0.3e^{2.19j}$ & $0.15e^{1.11j}$\\
  \bottomrule
 \end{tabular}
\end{table}

Then, with the assumption that each of the ground users is served with fixed data rate $C$, the average user density in the $\beta$-th subregion at $t$ can be immediately expressed as
\begin{equation}\label{equ:ConvertBetweenLambdaAndTrafficAmount}
\lambda_{\beta}(t) =  \frac{1}{C S_{\rm{bs}}} \, x_{\beta}^{\rm{r}}\left[ \left\lfloor {t}/{\mu} \right \rfloor \right].
\end{equation}
$S_{\rm{bs}}$ is the area of the coverage of base stations shown in \cite{xu2016understanding}. The normalized density patterns in one week corresponding to five considered subregions are depicted in Fig. \ref{Fig:TrafficPatterns}. It can be observed that the density patterns are instable and show the breath-out mode in the considered duration. In this case, to provide optimal coverage in each subregion, the placement of UAV-BSs is supposed to be updated with respect to the learned time-varying ground user densities. However, updating the placement requires the mobility of UAV-BSs, which may cost much energy and reduce the life-time of mobile UAVs network \cite{di2015energy}. Therefore, to minimize the UAV-RF of network, the optimal placement of UAV-BSs in one time-slot and the placement updating strategy in the considered duration should be jointly optimized.

\begin{figure}[tbp]
\centering
\includegraphics[width=0.48\textwidth]{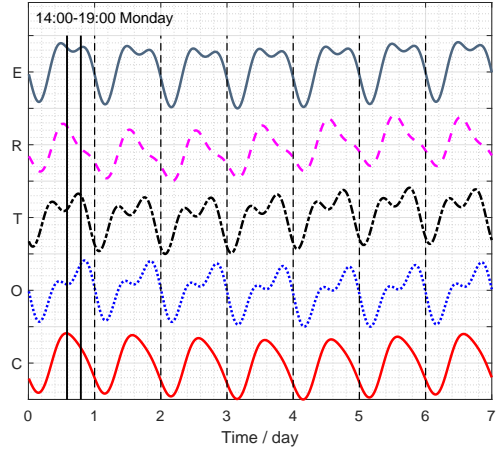}
\caption{Density patterns of ground users in different subregions.} \label{Fig:TrafficPatterns}
\end{figure}

\section{Optimal Placement of UAV-BSs in single time-slot}\label{Sec:StaticCoverage}

In this section, we first formulate the optimal UAV-BSs placement problem in single time-slot. Then, we show that the original problem can be solved by separately investigating the optimal hovering altitude and optimal coverage radius. Finally, we present the optimal condition with respect to on-board circuit power. 

\subsection{Problem Formulation}\label{SubSec:ProblemFormulation}

Recall that the downlink rate is fixed to $C$. Denote the allocated transmit power to the interested user located at $r$ be $P_{\rm{tr},\rm{user},\xi}\big(r,h_\beta\big)$. Then, following Shannon formula, one has
\begin{equation}
C = W\log_2\left( 1 + \frac{P_{\rm{tr},\rm{user},\xi}\big(r,h_\beta\big)}{L_\xi\big(r,h_\beta\big) N_0W} \right),
\end{equation}
where $N_0$ is the noise power spectrum density, and $W$ is the allocated bandwidth to interested user. The relevant transmit power can be immediately derived as
\begin{equation}
P_{\rm{tr},\rm{user},\xi}\big(r,h_\beta\big) = L_\xi(r,h_\beta) N_0W \left( 2^{C/W} - 1\right).
\end{equation}
Accordingly, the average transmit power is given by
\begin{equation}
\bar{P}_{\rm{tr},\rm{user}}\big(r,h_\beta) = \bar{L}(r, h_\beta\big) N_0W \left( 2^{C/W} - 1\right).
\end{equation}
The expected value of transmit power of UAV-BS is the integral of average transmit power relevant to all users inside the coverage. That is,
\begin{equation}\label{equ:CommunicationPower}
\begin{split}
P_{\rm{tr}}\big(R_{\beta}, \lambda_{\beta}, h_\beta\big) = \lambda_{\beta}\int_0^{R_{\beta}} 2\pi r \bar{P}_{\rm{tr},\rm{user}}\big(r, h_\beta\big) \,dr.
\end{split}
\end{equation}

Let the on-board circuit power and the battery capacity of one single UAV-BS be $P_{\rm{cu}}$ and $E_{\rm{b}}$, respectively. Then, the average life-time of UAV-BSs in the $\beta$-th subregion is 
\begin{equation}
{E_{\rm{b}}}/{\big(P_{\rm{tr}}(R_{\beta}, \lambda_{\beta}, h_\beta) + P_{\rm{cu}}\big)},
\end{equation}
and accordingly the static UAV-RF at the $\beta$-th subregion can be expressed as 
\begin{equation}\label{Equ:ChangingBatteryNum}
\Phi_{\rm{st},\beta}(t) = \frac{N_\beta(t)\left(P_{\rm{tr}}(R_{\beta}, \lambda_{\beta}, h_\beta) + P_{\rm{cu}}\right)}{E_{\rm{b}}}.
\end{equation}
From  \eqref{Equ:ChangingBatteryNum}, it can be observed that static UAV-RF is determined by the number of UAV-BSs and the power relevant to signal transmission and on-board circuit. Besides, the total consumed power of  the mobile UAVs network can be written as 
\begin{equation}
\sum_{\beta=1}^{\kappa}N_\beta(t)\bigg(P_{\rm{tr}}\big(R_{\beta}, \lambda_{\beta}, h_\beta\big) + P_{\rm{cu}}\bigg)  = E_{\rm{b}} \sum_{\beta=1}^{\kappa} \Phi_{\rm{st},\beta}(t).
\end{equation}
That is, the total UAV-RF of our considered area takes both the power consumed by one single UAV-BS and the number of UAV-BSs into account. Compared with the total consumed power, the notion of UAV-RF characterizes the frequency of active UAVs run out of batteries and is a more comprehensive indicator of the life-time of mobile UAVs network.

As illustrated in Section \ref{SubSec:A2GChannel}, to minimize static UAV-RF of considered area in single time-slot, the altitudes and coverage radii of UAV-BSs need to be jointly considered. That is,
\begin{flalign}\label{Equ:MinChangingBatteryNumber}
\mathcal{P}_1: \,\,\min_{\left\{R_{\beta},\,h_\beta\right\}} \,\,\, &\sum_{\beta=1}^{\kappa} \Phi_{\rm{st},\beta}(t)\\
\textrm{s.t.}\,\,\,&h_\beta\geq0,R_{\beta} > 0.\label{equ:MainOptimizeProblemConstraints}
\end{flalign}
In $\mathcal{P}_1$, \eqref{equ:MainOptimizeProblemConstraints} corresponds to the non-negativity of hovering altitudes and coverage radii of UAV-BSs. According to \eqref{Equ:ChangingBatteryNum}, it can be observed that the optimization variables $\left\{R_{\beta},h_\beta\right\}$ are coupled, and hence directly solving problem $\mathcal{P}_1$ is difficult. In the following, we consider a two-stage method instead. 

\subsection{Optimal Hovering Altitude of UAV-BSs}\label{Sec:Coverage4SingleUAV}
We firstly consider the optimal hovering altitude minimizing the transmit power with a fixed coverage radius, which can be expressed as
\begin{flalign}
\mathcal{P}_{1\text{-}\rm{A}}: \,\,\min_{\left\{h_\beta\right\}} \,\,\, &P_{\rm{tr}}\big(R_{\beta}, \lambda_{\beta}, h_\beta\big)\\
\textrm{s.t.}\,\,\,&h_\beta\geq0.\label{equ:OptimalAltitudeConstraints}
\end{flalign}
\eqref{equ:OptimalAltitudeConstraints} reflects the non-negativity of hovering altitude. Then, the solution to $\mathcal{P}_{1\text{-}\rm{A}}$ can be summarized as follows.
\begin{lem}\label{Lem:ScaleOptimalAltitude}
1) The transmit power of single UAV-BS can be expressed as
\begin{equation}\label{Equ:ScaleTransmitPower}
\begin{split}
P_{\rm{tr}}\big(R_{\beta}, \lambda_{\beta}, h_\beta\big) =\gamma_{\rm{tr}} P_{\rm{tr},1}\big({h_\beta}/{R_{\beta}}\big),
\end{split}
\end{equation}
$P_{\rm{tr},1}(h_\beta/R_{\beta})=P_{\rm{tr}}\big(1, 1, h_\beta/R_{\beta}\big)$ is the transmit power corresponding to hovering altitude $h_\beta/R_{\beta}$, when $R_{\beta}$, $\lambda_\beta$ and $C$ are normalized. $\gamma_{\rm{tr}}=\lambda_{\beta} R^4_{\beta} \left( 2^{C/W} - 1\right)$ is the relevant scaling factor.

2) The optimal hovering altitude corresponding to specific $R_{\beta}$ is $h^*=R_{\beta}h_{\beta,1}^*$,
where $h_{\beta,1}^*$ is the optimal hovering altitude that minimizes $P_{\rm{tr},1}(h_{\beta,1})$ and is only determined by communication environment.
\end{lem}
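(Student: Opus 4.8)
The plan is to establish part~1 by a single change of variables in \eqref{equ:CommunicationPower} that exposes the self-similarity of the A2G path loss, and then to read off part~2 as an immediate consequence. First I would substitute $r = R_\beta u$ with $u\in[0,1]$ in \eqref{equ:CommunicationPower}, so that $dr = R_\beta\,du$ and $r^2+h_\beta^2 = R_\beta^2\big(u^2+(h_\beta/R_\beta)^2\big)$. The step everything hinges on is the observation that the LOS probability $P_0(r,h_\beta)$ of \eqref{equ:ProbofLOSLink} depends on $(r,h_\beta)$ only through the elevation angle $\theta=\frac{180}{\pi}\tan^{-1}(h_\beta/r)$, hence only through the ratio $h_\beta/r$; after the substitution this ratio becomes $(h_\beta/R_\beta)/u$, which is exactly the LOS probability one would see at normalized ground distance $u$ and normalized altitude $h_\beta/R_\beta$. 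Tracking the factors of $R_\beta$ that appear — one from $dr$, two from the FSPL term $r^2+h_\beta^2$, and one from the $2\pi r$ area weight — produces an overall $R_\beta^4$, while $\lambda_\beta$ and $2^{C/W}-1$ come straight out of the integral; what is left under the integral, after normalizing $R_\beta$, $\lambda_\beta$ and $C$ to unity, is precisely the normalized transmit power $P_{\rm{tr},1}(h_\beta/R_\beta)$. This yields \eqref{Equ:ScaleTransmitPower} with $\gamma_{\rm{tr}}=\lambda_\beta R_\beta^4(2^{C/W}-1)$.

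For part~2, since $\gamma_{\rm{tr}}>0$ is independent of $h_\beta$, problem $\mathcal{P}_{1\text{-}\rm{A}}$ is equivalent to $\min_{h_\beta\geq0}P_{\rm{tr},1}(h_\beta/R_\beta)$. Writing $\tilde h = h_\beta/R_\beta$ reduces this to the scalar problem $\min_{\tilde h\geq0}P_{\rm{tr},1}(\tilde h)$, whose only data are the constants $a$, $b$, $\eta_0$, $\eta_1$, the carrier frequency $f$ and universal constants --- i.e.\ the communication-environment parameters and nothing relating to $R_\beta$, $\lambda_\beta$ or $C$. Hence any minimizer $h_{\beta,1}^{*}$ is environment-determined, and $\argmin_{h_\beta\geq0}P_{\rm{tr}}(R_\beta,\lambda_\beta,h_\beta)=R_\beta\,h_{\beta,1}^{*}$, which is the asserted linear scaling with ``slope'' $h_{\beta,1}^{*}$ depending only on the environment. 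I would also record existence of a minimizer: in the integrand of $P_{\rm{tr},1}$ the average excessive path loss factor $\eta_1+P_0(\eta_0-\eta_1)$ stays bounded between $\eta_0$ and $\eta_1$ while the FSPL factor grows like $\tilde h^2$, so $P_{\rm{tr},1}(\tilde h)\to\infty$ as $\tilde h\to\infty$, whereas $P_{\rm{tr},1}(0)$ is finite; coercivity together with continuity on $[0,\infty)$ then gives a global minimizer.

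The genuine obstacle is confined to the first paragraph: one must notice and justify the scale invariance of $P_0$ in $(r,h_\beta)$ and then keep the powers of $R_\beta$ straight through the substitution. Everything afterwards --- factoring the constants out of the integral, collapsing $\mathcal{P}_{1\text{-}\rm{A}}$ to a parameter-free one-dimensional minimization, and rescaling the minimizer by $R_\beta$ --- is routine. One subtlety worth flagging is that calling $h_{\beta,1}^{*}$ \emph{the} optimal altitude presupposes uniqueness, i.e.\ that $P_{\rm{tr},1}$ is unimodal on $[0,\infty)$; establishing that with the logistic $P_0$ is a more delicate computation, but it is not needed for the lemma as stated --- fixing $h_{\beta,1}^{*}$ as any global minimizer and invoking the scaling suffices --- so I would not pursue unimodality here.
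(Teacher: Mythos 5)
Your proof is correct and follows essentially the same route as the paper: both rest on the change of variables $r=R_\beta u$ together with the scale invariance $\bar{L}(r,h_\beta)=R_\beta^2\,\bar{L}(r/R_\beta,h_\beta/R_\beta)$ (equivalently, that $P_0$ depends only on the ratio $h_\beta/r$), which extracts the factor $\lambda_\beta R_\beta^4(2^{C/W}-1)$ and reduces part~2 to a one-dimensional, environment-only minimization. Your handling of part~2 via the argmin directly (plus the coercivity/existence remark) is slightly more careful than the paper's first-order-condition equivalence, but it is not a different argument.
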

\begin{proof}
With \eqref{equ:CommunicationPower}, we can derive that
\begin{equation}\label{Equ:KernelFunction}
P_{\rm{tr},1}\left({h_\beta}/{R_{\beta}}\right) = \int_{0}^{1} 2\pi r \bar{L}(r,h_\beta/R_{\beta})N_0 W {\rm{d}} r.
\end{equation}
Besides,  \eqref{Equ:AveragePathLoss} can be rewritten as $\bar{L}(r, h_\beta) = R^2_{\beta}\,\bar{L}(r/R_{\beta},h_\beta/R_{\beta})$.
The transmit power of one single UAV-BS can be expressed as
\begin{equation}
\begin{split}
P_{\rm{tr}}(R_{\beta}, \lambda_{\beta}, h_\beta) &= \lambda_{\beta}\int_0^{R_{\beta}} 2\pi r \bar{P}_{\rm{u}}(r, h_\beta) \,{\rm{d}}r\\
&=\lambda_{\beta}R^2_{\beta}\left( 2^{C/W} - 1\right) \int_0^{R_{\beta}} 2\pi r\,\bar{L}(r/R_{\beta},h_\beta/R_{\beta}) {\rm{d}}r\\
&=\lambda_{\beta}(\beta,t)R^4_{\beta}\left( 2^{C/W} - 1\right)\int_0^{1} 2\pi r\,\bar{L}(r,h_\beta/R_{\beta}) {\rm{d}} r\\
&=\lambda_{\beta}R^4_{\beta}\left( 2^{C/W} - 1\right) P_{\rm{tr}}(1, 1, h_\beta/R_{\beta}).
\end{split}
\end{equation}
This completes the proof of \eqref{Equ:ScaleTransmitPower}. Then, one can immediately have 
\begin{equation}
\frac{\partial P_{\rm{tr}}\big(R_{\beta}, \lambda_{\beta}, h_\beta\big)}{\partial h_\beta} = 0
\Leftrightarrow
\frac{\partial P_{\rm{tr},1}(h_\beta/R_{\beta})}{\partial h_\beta} = 0.
\end{equation}
Noticing that $P_{\rm{tr},1}(h_\beta/R_{\beta})$ only accounts for the communication environment, $h_\beta^*=R_{\beta}h_{\beta,1}^*$ naturally follows.
\end{proof}
Lemma \ref{Lem:ScaleOptimalAltitude} clearly indicates the individual effects of communication environment and coverage parameters on optimal hovering altitude. That is, $h_{\beta,1}^*$ accounts for the environmental statistics, while $\gamma_{\rm{tr},1}$ explains the effects of coverage parameters. This implies that solving $\mathcal{P}_{1\text{-}\rm{A}}$ is equivalent to finding the optimal hovering altitude that minimizes $P_{\rm{tr},1}(h_{\beta,1})$. That is,
\begin{equation}\label{Equ:OptimalAltitude}
h_{\beta,1}^* = \argmine_{h_{\beta,1}} \left\{\frac{\partial P_{\rm{tr},1}(h_{\beta,1})}{\partial h_{\beta,1}} = 0\right\}.
\end{equation}
From \eqref{Equ:KernelFunction}, we have that
\begin{equation}\label{Equ:PartialEquation}
\begin{split}
\frac{\partial P_{\rm{tr},1}(h_{\beta,1})}{\partial h_{\beta,1}} = 0 \Leftrightarrow \int_0^1 &\Big\{ 2h_{\beta,1}\big( \eta_1 + P_{0}(r, h_{\beta,1})(\eta_0-\eta_1) \big) + \\
& \big(r^2 + h_{\beta,1}^2\big) \big( \eta_1 + \frac{\partial P_{0}(r, h_{\beta,1})}{\partial h_{\beta,1}}(\eta_0-\eta_1) \big) \Big\} r {\rm{d}} r = 0,
\end{split}
\end{equation}
where based on \eqref{equ:ProbofLOSLink},
\begin{equation}\label{Equ:PartialPLOS}
\frac{\partial P_{0}(r, h_{\beta,1})}{\partial h_{\beta,1}} = \frac{180 b r P_{0}(r, h_{\beta,1})}{\pi (r^2 + h_{\beta,1}^2)}(1-P_{0}(r, h_{\beta,1})).
\end{equation}
Substituting \eqref{Equ:PartialEquation} and \eqref{Equ:PartialPLOS} into  \eqref{Equ:OptimalAltitude}, one can get $h_{\beta,1}^*$. However, it is overwhelming to obtain explicit solution of  \eqref{Equ:OptimalAltitude}. As an alternative, we propose a binary search algorithm to calculate the optimal hovering altitude, as shown in Algorithm \ref{Alg:OptimalAltitude}. Note that the precision $\epsilon=10^{-3}$ and the iteration scaling factor $\gamma_{i}=10$ shown at line 0 of Algorithm \ref{Alg:OptimalAltitude} can be readily replaced with any other values that satisfy $\epsilon > 0$ and $\gamma_{i}>1$.

\begin{algorithm}[htbp]
    \caption{Optimal Hovering Altitude}\label{Alg:OptimalAltitude}
    \begin{algorithmic}[1]
    	\item \textbf{Initialize} ~~\\          
        Environmental parameters: $\eta_1$, $\eta_2$, $a$ and $b$;\\
        Input coverage parameters: $R_{\beta}$, $\lambda_{\beta}$, $C$;\\
        Initialize iteration parameters: $h_{\beta,1}^*=h_{\rm{min}}=0$, $h_{\rm{max}}=1$;\\
        Set the precision $\epsilon=10^{-3}$ and iteration scaling factor $\gamma_{i}=10$.\\

        \WHILE {$\left(\frac{\partial P_{\rm{tr},1}(h_\beta)}{\partial h_\beta}|_{h_\beta=h_{\rm{max}}}\frac{\partial P_{\rm{tr},1}(h_\beta)}{\partial h_\beta}|_{h_\beta=h_{\rm{min}}}\right)\geq0$}\label{code:FindMaxh}
        	\STATE $h_{\rm{max}} = \gamma_{i} h_{\rm{max}}$;\\
        \ENDWHILE \\

        \WHILE {$\frac{\partial P_{\rm{tr},1}(h_\beta)}{\partial h_\beta}|_{h_\beta=h_{\beta,1}^*} \geq \epsilon$}\label{code:FindAccurateEta}
        	\STATE $h_{\beta,1}^*= \left( h_{\rm{max}} + h_{\rm{min}}\right)/2$;\\
            \IF  {$\frac{\partial P_{\rm{tr},1}(h_\beta)}{\partial h_\beta}|_{h_\beta=h_{\beta,1}^*}\geq0$}
            \STATE $h_{\rm{max}} = h_{\beta,1}^*$;
            \ELSE
            \STATE $h_{\rm{min}} = h_{\beta,1}^*$;
            \ENDIF
        \ENDWHILE \\
       \textbf{Output} ~ The optimal hovering altitude $h_\beta^* = R_{\beta} h_{\beta,1}^*$.\\          
    \end{algorithmic}
\end{algorithm}

\subsection{Optimal Coverage Radius of UAV-BSs}\label{Sec:Coverage4Area}

According to Lemma \ref{Lem:ScaleOptimalAltitude}, the optimal hovering altitudes of UAV-BSs are determined by the coverage radius. Then, $\mathcal{P}_1$ can be immediately rewritten as
\begin{flalign}\label{Prob:Optimal2DArrangement}
\mathcal{P}_{1\text{-}\rm{B}}: \,\,\min_{\left\{R_{\beta}\right\}} \,\,\, &\sum_{\beta=1}^{\kappa} \Phi_{\rm{st},\beta}(t)\\
\textrm{s.t.}\,\,\,&h_\beta^* = R_{\beta} h_{\beta,1}^*,\tag{\theequation a}\label{equ:Optimize2DArrangementa}\\
&R_{\beta} > 0.\tag{\theequation b}\label{equ:Optimize2DArrangementb}
\end{flalign}
\eqref{equ:Optimize2DArrangementa} is the optimal hovering altitude corresponding to $R_{\beta}$, and \eqref{equ:Optimize2DArrangementb} reflects the non-negativity of coverage radius. As shown in  \eqref{Prob:Optimal2DArrangement}, the static UAV-RF of considered area is the sum of individual static UAV-RF in each subregion. Hence, solving $\mathcal{P}_{1\text{-}\rm{B}}$ is equivalent to minimizing $\Phi_{\rm{st},\beta}(t)$ for all $\beta\in\{1,2,\cdots,\kappa\}$, separately. 

Substituting \eqref{Equ:UAVNumber} and  \eqref{Equ:ScaleTransmitPower} into \eqref{Equ:ChangingBatteryNum}, the static UAV-RF can be expressed as
\begin{equation}\label{Equ:PhiBetaExpression}
\begin{split}
\Phi_{\rm{st},\beta}(t) = \frac{S_\beta}{\pi E_{\rm{b}}} \bigg(\frac{P_{\rm{cu}}}{R^2_{\beta}} + \lambda_{\beta}\left( 2^{C/W} - 1 \right)P_{\rm{tr},1}(h_{\beta,1}^*)R^2_{\beta}\bigg).
\end{split}
\end{equation}
Thus, with the theoretical results shown in Section \ref{Sec:Coverage4SingleUAV}, the optimal placement of UAV-BSs that minimizes static UAV-RF in the considered area can be summarized as follows.
\begin{thm}\label{Thm:Optimal3DPlacement}
1) The optimal coverage radius and hovering altitude of UAV-BS in the $\beta$-th subregion at $t$ is
\begin{equation}\label{Equ:Optimal2DArrangement}
R^*_{\beta}(t) = \sqrt[4]{\frac{P_{\rm{cu}}}{\lambda_{\beta}(t)\left( 2^{C/W} - 1 \right)P_{\rm{tr},1}(h_{\beta,1}^*(t))}}
\end{equation}
and
\begin{equation}\label{Equ:OptimalHoveringAltitude}
h_\beta^*(t) = R^*_{\beta}(t) h_{\beta,1}^*(t),
\end{equation}
respectively. $h_{\beta,1}^*(t)$ is given by  \eqref{Equ:OptimalAltitude}.

2) The optimal placement of UAV-BSs in the $\beta$-th subregion is achieved when the on-board circuit power of one UAV-BS equals its transmit power. That is,
\begin{equation}\label{Equ:TransmitEqualToHovering}
P_{\rm{cu}} = \gamma_{\rm{tr}}^*(t) P_{\rm{tr}}(h_{\beta,1}^*(t)),
\end{equation}
where $\gamma_{\rm{tr}}^*(t)=\lambda_{\beta}(t) R^{*4}_{\beta}(t) \left( 2^{C/W} - 1\right)$.
\end{thm}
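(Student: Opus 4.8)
The plan is to build directly on Lemma~\ref{Lem:ScaleOptimalAltitude} and the closed form~\eqref{Equ:PhiBetaExpression} for the static UAV-RF. First I would make the two-stage reduction explicit and lossless: since $\Phi_{\rm{st},\beta}$ is monotonically increasing in $P_{\rm{tr}}$, minimizing over $(R_\beta,h_\beta)$ jointly equals, for each fixed $R_\beta$, first inserting the transmit-power-minimizing altitude $h_\beta^*=R_\beta h_{\beta,1}^*$ from Lemma~\ref{Lem:ScaleOptimalAltitude}, and then minimizing the resulting one-variable objective over $R_\beta$. The key point is that $h_{\beta,1}^*$ — hence $P_{\rm{tr},1}(h_{\beta,1}^*)$ — depends only on the communication environment and not on $R_\beta$, so after this substitution~\eqref{Equ:PhiBetaExpression} is a genuine function of $R_\beta$ alone. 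Because $\sum_\beta\Phi_{\rm{st},\beta}$ is separable, $\mathcal{P}_{1\text{-}\rm{B}}$ splits into $\kappa$ independent scalar problems, one per subregion.

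Next I would solve each scalar problem. Writing $u=R_\beta^2>0$, \eqref{Equ:PhiBetaExpression} takes the form $\Phi_{\rm{st},\beta}=A_\beta/u+B_\beta u$ with
\[
A_\beta=\frac{S_\beta P_{\rm{cu}}}{\pi E_{\rm{b}}},\qquad B_\beta=\frac{S_\beta\lambda_\beta\left(2^{C/W}-1\right)P_{\rm{tr},1}(h_{\beta,1}^*)}{\pi E_{\rm{b}}},
\]
both strictly positive. This function is strictly convex in $u$ and blows up as $u\to0^+$ and as $u\to\infty$, so it has a unique global minimizer, obtained from $\partial\Phi_{\rm{st},\beta}/\partial u=-A_\beta/u^2+B_\beta=0$, i.e. $u^\ast=\sqrt{A_\beta/B_\beta}$. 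Unwinding $u^\ast=R_\beta^{\ast2}$ gives $R_\beta^{\ast4}=A_\beta/B_\beta=P_{\rm{cu}}/\big(\lambda_\beta(2^{C/W}-1)P_{\rm{tr},1}(h_{\beta,1}^*)\big)$, which is~\eqref{Equ:Optimal2DArrangement}; the companion altitude~\eqref{Equ:OptimalHoveringAltitude} is then immediate from Lemma~\ref{Lem:ScaleOptimalAltitude}. Equivalently, AM--GM yields $\Phi_{\rm{st},\beta}\ge 2\sqrt{A_\beta B_\beta}$ with equality iff $A_\beta/u=B_\beta u$, which pins down $u^\ast$ without differentiation.

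For part~2), the crucial observation is exactly the AM--GM equality condition at $u^\ast$: $A_\beta/u^\ast=B_\beta u^\ast$, meaning the two summands in~\eqref{Equ:PhiBetaExpression} coincide at optimality. I would then identify these summands physically. Using $N_\beta=S_\beta/(\pi R_\beta^2)$ from~\eqref{Equ:UAVNumber}, the first summand is $N_\beta P_{\rm{cu}}/E_{\rm{b}}$; using the scaling law of Lemma~\ref{Lem:ScaleOptimalAltitude}, namely $P_{\rm{tr}}=\gamma_{\rm{tr}}P_{\rm{tr},1}(h_{\beta,1}^*)$ with $\gamma_{\rm{tr}}=\lambda_\beta R_\beta^4(2^{C/W}-1)$, the second summand is $N_\beta P_{\rm{tr}}/E_{\rm{b}}$. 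Equating the two at $R_\beta=R_\beta^*(t)$ gives $P_{\rm{cu}}=P_{\rm{tr}}=\gamma_{\rm{tr}}^*(t)P_{\rm{tr},1}(h_{\beta,1}^*(t))$ with $\gamma_{\rm{tr}}^*(t)=\lambda_\beta(t)R_\beta^{\ast4}(t)(2^{C/W}-1)$, which is~\eqref{Equ:TransmitEqualToHovering}.

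There is no deep obstacle: once Lemma~\ref{Lem:ScaleOptimalAltitude} is available, the argument is a scalar convex optimization. The only points demanding care are (i) justifying that optimizing altitude and radius sequentially returns the true joint optimum — which rests on the monotonicity of $\Phi_{\rm{st},\beta}$ in $P_{\rm{tr}}$ and on $h_{\beta,1}^*$ being independent of $R_\beta$ — and (ii) confirming the stationary point is the global minimum rather than a saddle, handled by strict convexity in $u$ together with the divergence of $\Phi_{\rm{st},\beta}$ at both ends of $(0,\infty)$.
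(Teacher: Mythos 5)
Your proof is correct and follows essentially the same route as the paper: apply Lemma~\ref{Lem:ScaleOptimalAltitude} to reduce to the single-variable objective $A_\beta/R_\beta^2+B_\beta R_\beta^2$ in \eqref{Equ:PhiBetaExpression}, minimize it (the paper implicitly uses the same AM--GM step, yielding \eqref{Equ:InequalityOfPhi}), and read off part~2) from the equality of the two summands at the optimum. Your write-up is merely more explicit about why the sequential altitude-then-radius optimization is lossless and why the stationary point is the global minimum, which the paper leaves implicit.
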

\begin{proof}
Denote the minimal static UAV-RF as $\Phi^*_{\rm{st},\beta}(t)$. Since both the on-board circuit power and user density in  \eqref{Equ:PhiBetaExpression} are positive, one has
\begin{equation}\label{Equ:InequalityOfPhi}
\Phi^*_{\rm{st},\beta}(t) = \frac{2S_\beta}{\pi E_{\rm{b}}} \sqrt{\lambda_{\beta}\left( 2^{C/W} - 1 \right)P_{\rm{cu}}P_{\rm{tr},1}(h_{\beta,1}^*)},
\end{equation}
and $\Phi^*_{\rm{st},\beta}(t)$ is achieved when $R_{\beta}(t) = R^*_{\beta}(t)$, which is shown in  \eqref{Equ:Optimal2DArrangement}. According to Lemma \ref{Lem:ScaleOptimalAltitude}, the optimal hovering altitude corresponding to $R^*_{\beta}(t)$ is given by  \eqref{Equ:OptimalHoveringAltitude}. Substitute  \eqref{Equ:Optimal2DArrangement} into \eqref{Equ:ScaleTransmitPower}, \eqref{Equ:TransmitEqualToHovering} can be easily proved.
\end{proof}
Theorem \ref{Thm:Optimal3DPlacement} not only provides the optimal hovering altitude and the optimal coverage radius, but also points out the optimal transmit power and circuit power allocation. The former result provides valuable insights for UAV-BSs positioning while the latter result presents an efficient resource allocation for multi-function usage of UAV-BSs. The physical meaning of Theorem \ref{Thm:Optimal3DPlacement} is intelligible. When on-board circuit power is high, large coverage radius can decrease the number of active UAV-BSs. According to  \eqref{Equ:ChangingBatteryNum}, small $N_\beta(t)$ decreases the effects of high $P_{\rm{cu}}$ on static UAV-RF, and hence $\Phi_{\rm{st},\beta}(t)$ is reduced. By contrast, when on-board circuit power is low, small coverage radius can decrease transmit power, which also decreases $\Phi_{\rm{st},\beta}(t)$. Specifically, when $P_{\rm{cu}}=0$, we have $R^{*}_{\beta}=0$ and $h_\beta^*=0$. That is, users can connect to UAV-BSs just at their balance positions. In this way, enlarging the number of UAVs does not consume any on-board circuit power, while the transmit power is saved.

\section{Dynamic Placement of UAV-BSs in Considered Duration}\label{Sec:DynamicCoverage}

This section shall consider the placement of UAV-BSs in time-dimension. As illustrated in Section \ref{sec:learnedDensityPattern}, there exists a trade-off between updating the placement of UAV-BSs and reducing the dynamic UAV-RF. In this case, we firstly formulate the optimal placement updating problem in terms of minimum dynamic UAV-RF. We demonstrate that the original problem is NP-hard and can not be solved in conventional manners. Finally, a sequential method is proposed to update the UAV-BS placement near-optimally in polynomial time.

\subsection{Formulation of the Optimal Placement Updating}\label{sec:FormulationOfOptimalUpdate}

Recall that only at beginning of time-slots, the placement of UAV-BSs can be updated to be optimal. Denote the update epochs of UAV-BSs in the considered area as $\tau_{i}$ ($i=0,\cdots,N_{\tau}$). $N_{\tau}$ is the number of update epochs and $0 \leq N_{\tau}\leq {T}/{\mu}$. Specifically, let $\tau_0=0$. Denote the updated coverage radius of UAV-BSs in the $\beta$-th subregion at $\tau_i$ as $R_{\beta}(\tau_{i})$. The corresponding number of UAV-BSs can be expressed as $N_\beta(\tau_{i}) = S_\beta/R_{\beta}(\tau_{i})$. Furthermore, if $N_\tau\geq1$, the number of UAV-BSs need re-positioning at $\tau_i$ can be given by 
\begin{equation}
\zeta_\tau|_{\tau_{i-1}}^{\tau_{i}} = \max\left(N(\tau_{i-1}), N(\tau_{i})\right),
\end{equation}
where $N(\tau_i)=\sum_{\beta=1}^{\kappa}N_\beta(\tau_{i})$ and $i=1,\cdots,N_\tau$.

Then, the mobility energy of UAV-BSs at $\tau_{i}$ ($i=1,\cdots,N_\tau$) can be expressed as\cite{di2015energy}
\begin{equation}\label{Equ:MigrationEnergy}
\Omega_{\rm{m}}(\tau_{i}) = \sum_{l=1}^{\zeta_\tau|_{\tau_{i-1}}^{\tau_{i}}}P_{\rm{h}}\frac{d_\beta(\tau_{i},l)}{v_{\rm{h}}} + I(\Delta h_\beta(\tau_{i},l)) P_{\rm{a}}\frac{\Delta h_\beta(\tau_{i},l)}{v_{\rm{a}}} - \big(1-I(\Delta h_\beta(\tau_{i},l))\big) P_{\rm{d}}\frac{\Delta h_\beta(\tau_{i},l)}{v_{\rm{d}}}, 
\end{equation}
where $P_{\rm{h}}$ and $v_{\rm{h}}$ are the mobility power and velocity for one single UAV-BS in the horizontal direction, respectively. $d_\beta(\tau_{i},l)$ and $\Delta h_\beta(\tau_{i},l)$ are the horizontal moving distance and the variation of the height of the $l$-th UAV-BS in the $\beta$-th subregion at $\tau_{i}$, respectively. $P_{\rm{a}}$ and $v_{\rm{a}}$ denote the ascending power and ascending velocity of one single UAV-BS, respectively. Similarly, $P_{\rm{d}}$ and $v_{\rm{d}}$ denote the descending power and descending velocity. $I(\Delta h_\beta(\tau_{i},l))$ is the indicative function, i.e.
\begin{equation}
I(\Delta h_\beta(\tau_{i},l)) = 
\begin{cases}
1, \,\,\Delta h_\beta(\tau_{i},l) \geq 0\\
0, \,\,\Delta h_\beta(\tau_{i},l) < 0.
\end{cases}
\end{equation}
Specifically, $\Omega_{\rm{m}}(\tau_0)=0$. Note that $\Omega_{\rm{m}}(\tau_{i})$ ($i=1,\cdots,N_\tau$) is determined by the total moving distance of UAV-BSs at $\tau_i$, related to the placement of UAV-BSs at $\tau_{i-1}$. 

We consider the optimal updating strategy as $\{\tau_0,\cdots,\tau_{N_{\tau}}\}$. With \eqref{Equ:MigrationEnergy}, the average dynamic UAV-RF in considered duration can be given by 
\begin{equation}\label{Equ:DynamicUAV_RF}
\begin{split}
\bar{\Phi}_{\rm{dn}}|^{T}_0 = \frac{1}{T} \int_0^T \sum_{\beta=1}^{\kappa}\Phi_{\rm{dn},\beta}(t) d t =\frac{1}{T} \left\{ \int_0^T \sum_{\beta=1}^{\kappa}\Phi_{\rm{st},\beta}(t) d t + \frac{1}{E_{\rm{b}}}\sum_{i=0}^{N_\tau}\Omega_{\rm{m}}(\tau_i)\right\},
\end{split}
\end{equation}
where $\Phi_{\rm{dn},\beta}(t)$ is the instantaneous dynamic UAV-RF at $t$. The corresponding optimal placement updating method can be immediately expressed as
\begin{flalign}\label{Equ:MinAverageUAV-RF}
\mathcal{P}_2: \,\,\min_{\{\tau_0,\cdots,\tau_{N_{\tau}}\}} \,\,\, &\bar{\Phi}_{\rm{dn}}|^{T}_0\\
\textrm{s.t.}\,\,\,& 0 \leq N_{\tau}\leq {T}/{\mu},\tag{\theequation a}\label{equ:MainOptimizeProblemConstraintsc}\\
&\tau_0=0,~\tau_{N_{\tau}} < T,\tag{\theequation b}\label{equ:MainOptimizeProblemConstraintsa}\\
& \text{if}~N_\tau \geq 1, \, \tau_{i} < \tau_{i+1} \, (i=0,\cdots,N_\tau-1). \tag{\theequation c}\label{equ:MainOptimizeProblemConstraintsb}
\end{flalign}
 \eqref{equ:MainOptimizeProblemConstraintsc} shows the bounds of update epochs.  \eqref{equ:MainOptimizeProblemConstraintsa} and  \eqref{equ:MainOptimizeProblemConstraintsb} denotes the range and the order of update epochs, respectively. 

Problem $\mathcal{P}_2$ is a typical multi-step decision process. Because the objective function given by  \eqref{Equ:MigrationEnergy} is nonlinear, $\mathcal{P}_2$ can be solved by nonlinear dynamic programming (NLDP) method. However, because the number of updating times is unknown in advance, the number of recursion formula can not be determined as well. Hence, $\mathcal{P}_2$ is NP-hard and difficult to be directly solved as the size of problem grows. Hereinafter, to efficiently find the near optimal solution of $\mathcal{P}_2$, we propose a new sequential method with polynomial computational complexity.

\subsection{Trajectory Planning}\label{sec:TrajPlan}

Given the update epochs $\tau_i$ ($i=1,\cdots,N_\tau$), the mobility energy shown in  \eqref{Equ:MigrationEnergy} is only determined by the updating trajectory of the placement of UAV-BSs. Therefore, we firstly analyze the optimal trajectory planning method that minimizes the mobility energy of UAV-BSs.

Let the 3D position of the $k$-th UAV-BS at $\tau_{i-1}$ be $\psi(k,\tau_{i-1})$, where $k=1,\cdots,N(\tau_{i-1})$. Then, the corresponding position set of UAV-BSs can be expressed as
\begin{equation}
\Psi(\tau_{i-1}) = \left( \psi(1,\tau_{i-1}), \cdots, \psi(N(\tau_{i-1}),\tau_{i-1}) \right),~i=1,\cdots,N_\tau.
\end{equation}
To balance the number of UAV-BSs before and after updating the placement of UAV-BSs at $\tau_i$, the redundant UAV-BSs should be recalled to RSC when $N(\tau_{i-1}) > N(\tau_i)$, whereas the additional UAV-BSs should be supplemented by RSC when $N(\tau_{i-1}) < N(\tau_i)$. Let the position of RSC be $\psi_{\rm{R}}(k)$. The position set of recalled or supplemented UAV-BSs can be expressed as  \eqref{Equ:RecallAndSupplenmentedUAVs}, which is shown at the top of this page. Then, the position sets of UAV-BSs at $\tau_{i-1}$ and $\tau_{i}$ can be immediately given by

\begin{figure*}[!t]
\normalsize
\begin{equation}\label{Equ:RecallAndSupplenmentedUAVs}
\Psi_{\rm{R}} = \begin{cases}
\left( \psi_{\rm{R}}(1), \cdots, \psi_{\rm{R}}(|N(\tau_i) - N(\tau_{i+1})|) \right), ~&\text{if~} N(\tau_i) \neq N(\tau_{i+1})\\
\quad\quad\quad\quad\quad\,\,\,\,\varnothing, ~&\text{if~} N(\tau_i) = N(\tau_{i+1}).
\end{cases}
\end{equation}
\hrulefill
\vspace*{4pt}
\end{figure*}

\begin{equation}
\hat{\Psi}(\tau_{i-1}) = \begin{cases}
\Psi(\tau_{i-1}), ~&\text{if~} N(\tau_{i-1}) \geq N(\tau_{i})\\
\left( \Psi(\tau_{i-1}), \Psi_{\rm{R}} \right), ~&\text{if~} N(\tau_{i-1}) < N(\tau_{i})
\end{cases}
\end{equation}
and
\begin{equation}
\hat{\Psi}(\tau_{i}) = \begin{cases}
\left( \Psi(\tau_{i}), \Psi_{\rm{R}} \right), ~&\text{if~} N(\tau_{i-1}) \geq N(\tau_{i})\\
\Psi(\tau_{i}), ~&\text{if~} N(\tau_{i-1}) < N(\tau_{i}),
\end{cases}
\end{equation}
$i=1,\cdots,N_\tau$. Obviously, the volume of both the position sets $\hat{\Psi}(\tau_{i-1})$ and $\hat{\Psi}(\tau_{i})$ is $\zeta_\tau|_{\tau_{i-1}}^{\tau_{i}}$. Our goal is to find the optimal mapping between $\hat{\Psi}(\tau_{i-1})$ and $\hat{\Psi}(\tau_{i})$ that minimizes $\Omega_{\rm{m}}(\tau_{i})$. 

According to \eqref{Equ:MigrationEnergy}, one can see that finding the optimal mapping is a standard integer linear programming (ILP). In general, this problem can be solved by using standard ILP solution methods. However, these solutions may not be efficient as the size of the problem grows. Due to the potential high number of UAV-BSs, a more efficient technique is needed. To this end, we transform the trajectory planning problem into a standard assignment problem, which can be solved with the Hungarian method in polynomial time $O\left(\left(\zeta_\tau|_{\tau_i}^{\tau_{i+1}}\right)^3 \right)$\cite{kuhn1955hungarian}.

Define the mobility energy matrix as $\mathbf{C}$, which is a $\zeta_\tau|_{\tau_{i-1}}^{\tau_{i}} \times \zeta_\tau|_{\tau_{i-1}}^{\tau_{i}}$ square matrix. Let $\hat{\psi}_{x,y}(k,\tau_{i})$ and $\hat{\psi}_{z}(k,\tau_{i})$ be the horizontal and vertical coordinate of the $k$-th element of $\hat{\psi}(k,\tau_{i})$, respectively. Then, the element at the $k$-th row and the $l$-th column of $\mathbf{C}$ can be immediately written as 
\begin{equation}\label{Equ:CostMat}
\begin{split}
\mathbf{C}(k,l) = &P_{\rm{h}}\frac{||\hat{\psi}_{x,y}(l,\tau_{i}) - \hat{\psi}_{x,y}(k,\tau_{i-1})||_2}{v_{\rm{h}}} + \\
& \quad\quad\quad I(\hat{\psi}_{z}(l,\tau_{i}) - \hat{\psi}_{z}(k,\tau_{i-1})) P_{\rm{a}}\frac{\hat{\psi}_{z}(l,\tau_{i}) - \hat{\psi}_{z}(k,\tau_{i-1})}{v_{\rm{a}}} - \\
& \quad\quad\quad \quad\quad\quad \big(1-I(\hat{\psi}_{z}(l,\tau_{i}) - \hat{\psi}_{z}(k,\tau_{i-1}))\big) P_{\rm{d}}\frac{\hat{\psi}_{z}(l,\tau_{i}) - \hat{\psi}_{z}(k,\tau_{i-1})}{v_{\rm{d}}}.\\
\end{split}
\end{equation}
$\mathbf{C}(k,l)$ represents the mobility energy when a UAV-BS moves from $\hat{\psi}(k,\tau_{i-1})$ to $\hat{\psi}(l,\tau_{i})$. Denote the corresponding $\zeta_\tau|_{\tau_{i-1}}^{\tau_{i}} \times \zeta_\tau|_{\tau_{i-1}}^{\tau_{i}}$ assignment matrix as $\mathbf{Z}$, where the element $\mathbf{Z}(k,l)$ is 1 if the UAV-BS at $\hat{\psi}(k,\tau_{i-1})$ is assigned to $\hat{\psi}(l,\tau_{i})$, or 0 otherwise. In this way, the standard assignment problem on finding the optimal trajectory can be formulated as
\begin{flalign}\label{Equ:OptimalTrajectoryPlanning}
\mathcal{P}_{2\text{-}\rm{A}}: \,\,\min_{\mathbf{Z}} \,\,\, &\sum_{k=1}^{\zeta_\tau|_{\tau_{i-1}}^{\tau_{i}}} \sum_{l=1}^{\zeta_\tau|_{\tau_{i-1}}^{\tau_{i}}} \mathbf{C}(k,l) \mathbf{Z}(k,l)\\
\textrm{s.t.}\,\,\,&\sum_{k=1}^{\zeta_\tau|_{\tau_{i-1}}^{\tau_{i}}} \mathbf{Z}(k,l)=1,~\sum_{l=1}^{\zeta_\tau|_{\tau_{i-1}}^{\tau_{i}}} \mathbf{Z}(k,l)=1.\label{equ:OptimalTrajectoryPlanning}
\end{flalign}
Constraint  \eqref{equ:OptimalTrajectoryPlanning} guarantees that the assignment between $\hat{\Psi}(\tau_{i-1})$ and $\hat{\Psi}(\tau_{i})$ is one-to-one. 

\subsection{Placement Updating Strategy}

Following the optimal trajectory planning method, this part gives the optimal updating strategy $\{ \tau_0,\cdots,\tau_{N_\tau} \}$ by solving problem $\mathcal{P}_{2}$. To this end, we first illustrate the Markov nature of $\mathcal{P}_{2}$, as shown in the following Lemma.

\begin{lem}\label{lem:Markov}
The static UAV-RF at $t$ can be expressed as
\begin{equation}\label{equ:MarkovEnergy}
\Phi_{\rm{st},\beta|(\tau_0,\cdots,\tau_i)}(t) = \Phi_{\rm{st},\beta|\tau_i}(t),
\end{equation}
where $\tau_i = \max\{ \tau_i|\tau_i\leq t \}$, $i=0,\cdots,N_\tau$. $\Phi_{\rm{st},\beta|(\tau_0,\cdots,\tau_i)}(t)$ and $\Phi_{\rm{st},\beta|\tau_i}(t)$ are the corresponding static UAV-RFs at time $t$ when $(\tau_0,\cdots,\tau_i)$ and $\tau_i$ are given, respectively.
\end{lem}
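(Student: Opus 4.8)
The plan is to trace each quantity that enters the static UAV-RF at time $t$ back to the data it genuinely depends on, and then to observe that, under the optimal updating policy, this data is exhausted by the single epoch $\tau_i=\max\{\tau_j:\tau_j\le t\}$ together with $t$ itself. In other words, the identity \eqref{equ:MarkovEnergy} is essentially a restatement of the fact, already contained in Theorem~\ref{Thm:Optimal3DPlacement}, that the single-slot optimal placement is a memoryless function of the current user density, so the proof should be short and structural rather than computational.

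First I would fix $t$, pick out the last update epoch $\tau_i\le t$, and note that on the interval $[\tau_i,\tau_{i+1})$ (with the convention $\tau_{N_\tau+1}:=T$) the placement of the UAV-BSs in subregion $\beta$ is held frozen at the configuration installed at $\tau_i$. Because the policy underlying $\mathcal{P}_2$ always resets to the single-slot optimum, Theorem~\ref{Thm:Optimal3DPlacement} expresses the coverage radius $R^*_\beta(\tau_i)$ of \eqref{Equ:Optimal2DArrangement} and the altitude $h^*_\beta(\tau_i)$ of \eqref{Equ:OptimalHoveringAltitude} as explicit functions of $\lambda_\beta(\tau_i)$ alone, the remaining inputs $P_{\rm{cu}}$, $C$, $W$ and the environmental constants $a,b,\eta_0,\eta_1$ being fixed throughout; the number $N_\beta(\tau_i)=S_\beta/(\pi R^{*2}_\beta(\tau_i))$ inherits the same property. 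The point to make explicit is that none of $R^*_\beta(\tau_i)$, $h^*_\beta(\tau_i)$, $N_\beta(\tau_i)$ depends on $\tau_0,\dots,\tau_{i-1}$ or on the configuration held before $\tau_i$, since an update overwrites the previous placement entirely.

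Next I would substitute $R_\beta=R^*_\beta(\tau_i)$, $h_\beta=h^*_\beta(\tau_i)$ and the instantaneous density $\lambda_\beta=\lambda_\beta(t)$ into the definition \eqref{Equ:ChangingBatteryNum} of $\Phi_{\rm{st},\beta}(t)$, or equivalently into \eqref{Equ:PhiBetaExpression}. The resulting value is a function of the pair $\big(\lambda_\beta(\tau_i),\lambda_\beta(t)\big)$ only, hence of $\tau_i$ and $t$ only; denoting it by $\Phi_{\rm{st},\beta|\tau_i}(t)$ yields \eqref{equ:MarkovEnergy}. The boundary cases are routine: for $i=N_\tau$ the frozen interval is $[\tau_{N_\tau},T)$, while for $N_\tau=0$ the single epoch $\tau_0=0$ governs all of $[0,T)$.

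I expect the one delicate point to be justifying that an update at $\tau_i$ carries no memory of the earlier placement. This is exactly where the single-slot analysis does the work: since \eqref{Equ:Optimal2DArrangement} fixes the optimal radius uniquely from $\lambda_\beta(\tau_i)$, the trajectory-planning step of Section~\ref{sec:TrajPlan} can only decide which physical UAV is mapped to which new position and how much mobility energy $\Omega_{\rm{m}}(\tau_i)$ is spent in doing so --- a quantity that is bookkept separately in \eqref{Equ:DynamicUAV_RF} and never enters $\Phi_{\rm{st},\beta}$. Once that separation between the post-update configuration (density-determined) and the cost of reaching it (history-dependent, but collected in $\Omega_{\rm{m}}$) is stated cleanly, the Markov identity follows with no further computation.
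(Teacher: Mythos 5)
Your proof is correct and follows essentially the same route as the paper's: the placement at time $t$ is the one frozen at the most recent update epoch $\tau_i$, so $R_\beta(t)=R_\beta(\tau_i)$, and by \eqref{Equ:PhiBetaExpression} the static UAV-RF depends only on that radius (and the exogenous $\lambda_\beta(t)$). The extra detail you supply --- that the radius installed at $\tau_i$ is a memoryless function of $\lambda_\beta(\tau_i)$ via Theorem~\ref{Thm:Optimal3DPlacement}, and that the history-dependent mobility cost is bookkept separately in $\Omega_{\rm{m}}$ --- is correct and merely makes explicit what the paper leaves implicit.
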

\begin{proof}
Since the placement UAV-BSs has been updated at $\tau_i$, the coverage radius at time $t$ can be given by $R_{\beta}(t)=R_{\beta}(\tau_i)$. In addition, because as shown in  \eqref{Equ:PhiBetaExpression}, the static UAV-RF is only determined by the coverage radius $R_{\beta}(t)$, \eqref{equ:MarkovEnergy} follows.
\end{proof}
With Lemma \ref{lem:Markov}, we can see that the static UAV-RF at time $t$ is determined by the nearest update epoch in a Markov mode. This motivates us to investigate the average dynamic UAV-RF between consecutive update epochs, denoted by
\begin{equation}
\bar{\Phi}_{\rm{dn}}|^{\tau_{i}}_{\tau_{i-1}} = \frac{ \int_{\tau_{i-1}}^{\tau_{i}} \sum_{\beta=1}^{\kappa}\Phi_{\rm{st},\beta}(t) d t + \Omega_{\rm{m}} (\tau_{i}) }{\tau_{i}-\tau_{i-1}},~i=1,\cdots,N_\tau.
\end{equation}
Considering the fact that the number of update epochs is unknown, we can sequentially decide the update epochs by minimizing $\bar{\Phi}_{\rm{dn}}|^{\tau_{i}}_{\tau_{i-1}}$. The following Theorem established the Pareto-optimality of this sequential decision method. 

\begin{thm}\label{thm:ParetoOptimalAlg}
Given $\tau_0=0$, the sequential-Markov-greedy-decision (S-MGD) method can be expressed as
\begin{equation}\label{equ:paretoDecision}
\tau_{i} = \argmin_{\tau_{i}\in(\tau_{i-1},T)} \bar{\Phi}_{\rm{dn}}|^{\tau_{i}}_{\tau_{i-1}},~i=1,\cdots,N_\tau,
\end{equation}
and the corresponding $\{ \tau_0,\cdots,\tau_{N_\tau} \}$ is a Pareto-optimal placement updating strategy.
\end{thm}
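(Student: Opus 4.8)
The plan is to establish Pareto-optimality by a carefully chosen exchange/contradiction argument over the space of feasible updating strategies. First I would fix the precise notion of Pareto-optimality being claimed: a strategy $\{\tau_0,\dots,\tau_{N_\tau}\}$ is Pareto-optimal if no other feasible strategy simultaneously achieves a smaller (or equal, with one strict) average dynamic UAV-RF on every interval $[\tau_{i-1},\tau_i]$ of the partition it induces. Invoking Lemma \ref{lem:Markov}, the key structural fact is that the static UAV-RF contribution $\int_{\tau_{i-1}}^{\tau_i}\sum_\beta \Phi_{\rm{st},\beta}(t)\,dt$ on one inter-epoch interval depends only on $\tau_{i-1}$ and $\tau_i$ and not on the earlier epochs, and that the mobility energy $\Omega_{\rm{m}}(\tau_i)$ depends only on the placements at $\tau_{i-1}$ and $\tau_i$, which by Theorem \ref{Thm:Optimal3DPlacement} are themselves determined by $\lambda_\beta(\tau_{i-1})$ and $\lambda_\beta(\tau_i)$ (together with the optimal trajectory planning of $\mathcal{P}_{2\text{-}\rm{A}}$). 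Hence $\bar{\Phi}_{\rm{dn}}|^{\tau_i}_{\tau_{i-1}}$ is a well-defined function of the pair $(\tau_{i-1},\tau_i)$ alone, which is exactly what lets the greedy recursion \eqref{equ:paretoDecision} be consistent.

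Next I would proceed by induction on $i$. For the base case, $\tau_0=0$ is forced by \eqref{equ:MainOptimizeProblemConstraintsa}, and $\tau_1$ is chosen by \eqref{equ:paretoDecision} to minimize $\bar{\Phi}_{\rm{dn}}|^{\tau_1}_{0}$ over $\tau_1\in(0,T)$; by construction no feasible choice of first epoch yields a strictly smaller average dynamic UAV-RF on its first interval. For the inductive step, suppose $\tau_0,\dots,\tau_{i-1}$ have been fixed by the S-MGD rule. Because of the Markov decoupling above, whatever happened on $[\,0,\tau_{i-1}]$ is irrelevant to the cost incurred on $[\tau_{i-1},\tau_i]$; the greedy rule picks $\tau_i$ to minimize $\bar{\Phi}_{\rm{dn}}|^{\tau_i}_{\tau_{i-1}}$, so no competing strategy that agrees with ours up to $\tau_{i-1}$ can do better on the $i$-th interval. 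To upgrade this ``no one can beat us on every interval'' statement to genuine Pareto-optimality, I would argue that any putative dominating strategy $\{\tau'_0,\dots,\tau'_{N'}\}$ would, by Pareto-domination, have to weakly improve every interval of our partition and strictly improve at least one; the greedy choice at the first index where the strategies diverge already contradicts this, since our epoch was the unique minimizer of the per-interval cost for that left endpoint (or, in case of ties, we can choose consistently so the domination cannot be strict). Finally I would verify that the resulting $N_\tau$ is finite and the procedure terminates — the recursion stops at the first $i$ for which the minimizing $\tau_i$ would force $\tau_i \ge T$, i.e. when holding the current placement through time $T$ is at least as good as any further update — so feasibility constraints \eqref{equ:MainOptimizeProblemConstraintsc}–\eqref{equ:MainOptimizeProblemConstraintsb} are met.

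The main obstacle I anticipate is the inductive step where I must rule out a ``non-myopic'' strategy that accepts a slightly worse $i$-th interval in order to land at a more favorable left endpoint $\tau_i$ for the remaining horizon — the usual failure mode of greedy algorithms. The resolution must lean on the specific Pareto (rather than aggregate-sum) formulation: Pareto-optimality only forbids being dominated on \emph{every} interval simultaneously, so a trade-off that sacrifices one interval to gain on another is precisely \emph{not} a Pareto-domination, and therefore cannot disqualify the greedy strategy. Making this distinction airtight — i.e. that S-MGD produces a partition no other feasible partition Pareto-dominates, even though S-MGD need not minimize the time-averaged total $\bar{\Phi}_{\rm{dn}}|^{T}_0$ — is the crux, and I would state it explicitly to avoid conflating Pareto-optimality with global optimality of $\mathcal{P}_2$. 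A secondary technical point is handling ties in \eqref{equ:paretoDecision} (the $\argmin$ need not be unique) and the boundary behavior at $t=T$; I would dispatch these by fixing a deterministic tie-breaking rule (e.g. smallest $\tau_i$) and treating $\tau_{N_\tau+1}:=T$ as a degenerate final interval with no mobility cost.
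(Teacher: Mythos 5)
Your proposal is correct and follows essentially the same route as the paper: both rest on the Markov decoupling of Lemma~\ref{lem:Markov} together with the fact that the greedy rule \eqref{equ:paretoDecision} minimizes $\bar{\Phi}_{\rm{dn}}|^{\tau_i}_{\tau_{i-1}}$ given $\tau_{i-1}$, so that any deviation at an epoch necessarily worsens the interval ending there, which is exactly the paper's argument (stated there only as a local perturbation of a single $\tau_i$ on $[\tau_{i-1},\tau_{i+1}]$, illustrated by a figure). Your treatment is in fact more careful than the paper's, since you make explicit the induction over epochs, the tie-breaking, the termination at $T$, and the crucial distinction between Pareto-optimality of the per-interval vector and global optimality of $\mathcal{P}_2$, all of which the paper leaves implicit.
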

\begin{proof}

\begin{figure}[htbp]
\centering{}
\includegraphics[width=0.6\textwidth]{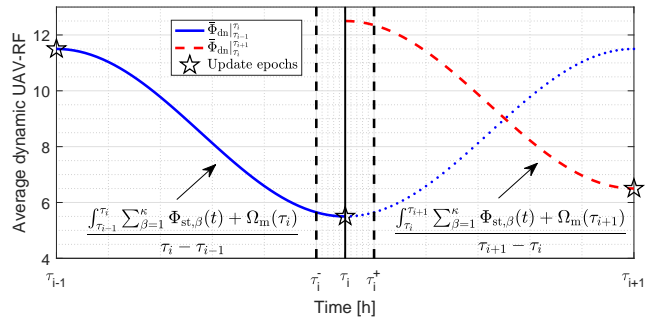}
\caption{A general case of proposed S-MGD method. The optimal update epochs are marked with stars.}\label{Fig:ParetoSample}
\end{figure}

For the convenience of proof, we utilize a general case shown in Fig. \ref{Fig:ParetoSample} to demonstrate the Pareto-optimality of proposed S-MGD method. The solid line and dotted line illustrate how $\bar{\Phi}_{\rm{dn}}|_{\tau_{i-1}}^{\tau_i}$ varies with respect to $\tau_i$, given $\tau_{i-1}$. According to \eqref{equ:paretoDecision}, the optimal update epoch $\tau_i$ is greedily determined and is marked with star in Fig. \ref{Fig:ParetoSample}. That is, $\bar{\Phi}_{\rm{dn}}|_{\tau_{i-1}}^{\tau_i}$ achieves its minimum at the optimal $\tau_i$. Similarly, the red dashed line depicts how $\bar{\Phi}_{\rm{dn}}|_{\tau_{i}}^{\tau_{i+1}}$ varies with $\tau_{i+1}$ when $\tau_i$ is given by  \eqref{equ:paretoDecision}. Also, the optimal $\tau_{i+1}$ that minimizes $\bar{\Phi}_{\rm{dn}}|_{\tau_{i}}^{\tau_{i+1}}$ is marked with star in Fig. \ref{Fig:ParetoSample}. Then, we prove the Pareto-optimality of Theorem \ref{thm:ParetoOptimalAlg}. That is, for the average dynamic UAV-RF between each consecutive update epochs during $[0,T]$, we can not decrease one of them without increasing the others. 

Take one of the the duration $[\tau_{i-1},\tau_{i+1}]$ ($i=1,\cdots,N_\tau-1$) for example. According to Lemma \ref{lem:Markov}, adjusting $\tau_i$ only affects $\bar{\Phi}_{\rm{dn}}|_{\tau_{i-1}}^{\tau_{i}}$ and $\bar{\Phi}_{\rm{dn}}|_{\tau_{i}}^{\tau_{i+1}}$. Let $\tau^{-}_{i}$ and $\tau^{+}_{i}$ denote the time belonging to $(\tau_{i-1},\tau_{i})$ and $(\tau_{i},\tau_{i+1})$, respectively. As shown in Fig. \ref{Fig:ParetoSample}, due to $\tau_i$ minimizes $\bar{\Phi}_{\rm{dn}}|^{\tau_{i}}_{\tau_{i-1}}$, one has
\begin{equation}
\begin{cases}
\bar{\Phi}_{\rm{dn}}|^{\tau^{-}_{i}}_{\tau_{i-1}} > \bar{\Phi}_{\rm{dn}}|^{\tau_{i}}_{\tau_{i-1}}\\
\bar{\Phi}_{\rm{dn}}|^{\tau^{+}_{i}}_{\tau_{i-1}} > \bar{\Phi}_{\rm{dn}}|^{\tau_{i}}_{\tau_{i-1}}.
\end{cases}
\end{equation}
That is, decreasing $\bar{\Phi}_{\rm{dn}}|^{\tau_{i+1}}_{\tau_{i}}$ without increasing $\bar{\Phi}_{\rm{dn}}|^{\tau_{i}}_{\tau_{i-1}}$ is impossible. This result also holds for any interval $[\tau_{i-1},\tau_{i+1}]$ ($i=1,\cdots,N_\tau-1$). Hence, we conclude that our proposed S-MGD based method Pareto-optimally solves $\mathcal{P}_{2}$. This completes the proof.
\end{proof}


\begin{rem}\label{Rem:Complexity}
It is noteworthy that the proposed S-MGD method has polynomial computational complexity, bounded by 
\begin{equation}\label{equ:Complexity}
O\left(T\zeta_{\tau,\min}^3/\mu \right) \leq \Theta_{\rm{S}} \leq O\left(T^2\zeta_{\tau,\max}^3/\mu^2 \right).
\end{equation}
$\zeta_{\tau,\min}$ and $\zeta_{\tau,\max}$ are the minimal and maximal number of UAV-BSs in considered duration.
\end{rem}
\begin{proof}
The number of time-slots during $(0,T)$ is $T/\mu$. Accordingly, the number of greedy searching shown in  \eqref{equ:paretoDecision} is bounded by $T/\mu$ and ${T(T/\mu+1)}/{(2\mu)}$. Note that in each of these greedy searchings, the optimal trajectory needs to be calculated by Hungarian method with a complexity of $O\left(\left(\zeta_\tau|_{\tau_i}^{\tau_{i+1}}\right)^3 \right)$\cite{kuhn1955hungarian}. Since $\zeta_{\tau,\min} \leq \zeta_\tau|_{\tau_i}^{\tau_{i+1}} \leq \zeta_{\tau,\max}$, \eqref{equ:Complexity} can be derived.
\end{proof}

\begin{rem}\label{rem:future}
The sequential decision process shown in \eqref{equ:paretoDecision} is based on the global knowledge of ground user density in $(0,T)$, which is inferred by the pattern formation systems.
\end{rem}
Following Remark \ref{rem:future}, we analyze the effects of density pattern accuracy on the performance of proposed S-MGD method, as shown in following section.

\section{Learning the Density Pattern of Ground Users}\label{sec:insight}
This section shall focus on predicting the user density by machine learning techniques. In particular, we first explicitly analyze the effect of pattern formation accuracy on the increment of UAV-RF. Then, we characterize the feature of subregions that sufficient sampling is required for reducing UAV-RF, based on the relation between sampling number and pattern accuracy.


\subsection{Effects of Pattern Formation Accuracy}\label{Sec:ImportantIssues}

Recall that the minimum static UAV-RF is achieved at $R^*_{\beta}$, based on specific density of ground users (conf. \eqref{Equ:OptimalHoveringAltitude}). An inaccurate predicted density of users, denoted as $\hat{\lambda}_{\beta}$, would increase static UAV-RF and reduce the life-time of batteries. Let the static UAV-RF corresponding to $\hat{\lambda}_{\beta}$ be
\begin{equation}
\begin{split}
\hat{\Phi}_{\rm{st},\beta}(t) = \frac{S_\beta}{\pi E_{\rm{b}}} \bigg(\frac{P_{\rm{cu}}}{\hat{R}^{*2}_{\beta}} + \lambda_{\beta}\left( 2^{C/W} - 1 \right)P_{\rm{tr},1}(h_{\beta,1}^*)\hat{R}^{*2}_{\beta} \bigg),
\end{split}
\end{equation}
where $\hat{R}^{*}_{\beta}$ is the generated coverage radius with $\hat{\lambda}_{\beta}$, i.e.
\begin{equation}
\hat{R}^{*}_{\beta} = \sqrt[4]{\frac{P_{\rm{cu}}}{\hat{\lambda}_{\beta}\left( 2^{C/W} - 1 \right)P_{\rm{tr},1}(h_{\beta,1}^*)}}.
\end{equation}
The following theorem characterizes the average increased UAV-RF caused by inaccurate prediction on $\lambda_\beta$.
\begin{thm}\label{thm:biasAndVarOnStatic}
The expected value of increased static UAV-RF is
\begin{equation}\label{equ:avgIncreasePhi}
\Delta \Phi_{\rm{st},\beta}(t) \triangleq \mathbb{E}\left[ \hat{\Phi}_{\rm{st},\beta}(t) - \Phi_{\rm{st},\beta}(t) \right] = \Lambda_\beta(t) \xi_\beta(t).
\end{equation}
$\Lambda_\beta(t)$ is the eigenvalue of the $\beta$-th subregion at time $t$, given by
\begin{equation}\label{equ:EigenValueOfSubregion}
\begin{split}
&\Lambda_\beta(t) = \frac{S_\beta}{4\pi E_{\rm{b}}} \sqrt{P_{\rm{cu}}\left( 2^{C/W} - 1 \right)P_{\rm{tr},1}(h_{\beta,1}^*)/ \lambda^3_{\beta}}.
\end{split}
\end{equation}
$\xi_\beta(t) \triangleq \text{var}(\hat{\lambda}_{\beta}(t)) + \text{bias}^2(\lambda_{\beta}(t))$ is the generalization error with quadratic loss function, which is composed by the variance and the square of bias of predicted ground user density. 
\end{thm}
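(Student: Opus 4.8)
The plan is to regard $\hat{\Phi}_{\rm{st},\beta}(t)$ — the static UAV-RF obtained when the radius $\hat{R}^{*}_\beta$ derived from the predicted density $\hat{\lambda}_\beta$ is deployed while the actual density is $\lambda_\beta$ — as a deterministic function of the single scalar $\hat{\lambda}_\beta$, and then to Taylor-expand this function about the exact value $\hat{\lambda}_\beta=\lambda_\beta$, where the first-order term vanishes by the optimality characterized in Theorem~\ref{Thm:Optimal3DPlacement}. First I would substitute $\hat{R}^{*}_\beta$ into the definition of $\hat{\Phi}_{\rm{st},\beta}(t)$ and simplify to
\begin{equation}
\hat{\Phi}_{\rm{st},\beta}(t)=F_\beta(\hat{\lambda}_\beta)\triangleq\frac{S_\beta}{\pi E_{\rm b}}\sqrt{P_{\rm cu}\big(2^{C/W}-1\big)P_{\rm{tr},1}(h_{\beta,1}^*)}\left(\sqrt{\hat{\lambda}_\beta}+\frac{\lambda_\beta}{\sqrt{\hat{\lambda}_\beta}}\right),
\end{equation}
and note that $F_\beta(\lambda_\beta)=\Phi_{\rm{st},\beta}(t)$ by comparison with \eqref{Equ:InequalityOfPhi}: an exact prediction reproduces the optimal deployment.

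Next I would differentiate $F_\beta$ and verify $F_\beta'(\lambda_\beta)=0$, which is precisely the stationarity condition defining $R^*_\beta$ in Theorem~\ref{Thm:Optimal3DPlacement}, so the leading term of the expansion about $\lambda_\beta$ is quadratic. Computing the second derivative and simplifying yields $F_\beta''(\lambda_\beta)=2\Lambda_\beta(t)$ with $\Lambda_\beta(t)$ exactly as in \eqref{equ:EigenValueOfSubregion}; hence the second-order expansion gives $\hat{\Phi}_{\rm{st},\beta}(t)-\Phi_{\rm{st},\beta}(t)\approx\Lambda_\beta(t)\,(\hat{\lambda}_\beta-\lambda_\beta)^2$. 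Taking the expectation over the randomness of the learned pattern and using the mean-squared-error identity $\mathbb{E}\big[(\hat{\lambda}_\beta-\lambda_\beta)^2\big]=\mathrm{var}(\hat{\lambda}_\beta)+(\mathbb{E}[\hat{\lambda}_\beta]-\lambda_\beta)^2=\xi_\beta(t)$ then delivers $\Delta\Phi_{\rm{st},\beta}(t)=\Lambda_\beta(t)\,\xi_\beta(t)$.

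The main obstacle is controlling the remainder in the expansion, since an exact evaluation gives
\begin{equation}
F_\beta(\hat{\lambda}_\beta)-F_\beta(\lambda_\beta)=\frac{S_\beta}{\pi E_{\rm b}}\sqrt{P_{\rm cu}\big(2^{C/W}-1\big)P_{\rm{tr},1}(h_{\beta,1}^*)}\;\frac{(\hat{\lambda}_\beta-\lambda_\beta)^2}{\sqrt{\hat{\lambda}_\beta}\big(\sqrt{\hat{\lambda}_\beta}+\sqrt{\lambda_\beta}\big)^2},
\end{equation}
so the clean coefficient $\Lambda_\beta(t)$ emerges only after replacing the denominator by its value at $\hat{\lambda}_\beta=\lambda_\beta$. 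I would therefore state the result in the small-prediction-error regime $|\hat{\lambda}_\beta-\lambda_\beta|\ll\lambda_\beta$ — equivalently, retain a third-order Taylor remainder and argue it is dominated by $\xi_\beta(t)$ — while the remaining work, namely the algebraic reduction to $F_\beta$, the two derivative computations, and the bias--variance split, is routine.
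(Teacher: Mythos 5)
Your proof is correct and rests on the same core mechanism as the paper's: a second-order Taylor expansion about the optimum, where the first-order term vanishes by stationarity, followed by the mean-squared-error decomposition $\mathbb{E}[(\hat{\lambda}_{\beta}-\lambda_{\beta})^2]=\mathrm{var}(\hat{\lambda}_{\beta})+\mathrm{bias}^2(\lambda_{\beta})$. The execution differs in a useful way. The paper performs two nested truncated expansions: $\hat{\Phi}_{\rm{st},\beta}$ in $R_{\beta}$ about $R^{*}_{\beta}$ to second order, and $\hat{R}^{*}_{\beta}$ in $\hat{\lambda}_{\beta}$ about $\lambda_{\beta}$ to first order, each with its own discarded remainder. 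You instead compose the two maps and work with the single-variable function $F_\beta(\hat{\lambda}_\beta)\propto\sqrt{\hat{\lambda}_\beta}+\lambda_\beta/\sqrt{\hat{\lambda}_\beta}$, which I have checked is the correct reduction (and $F_\beta''(\lambda_\beta)=2\Lambda_\beta(t)$ does match \eqref{equ:EigenValueOfSubregion}). This buys you an exact remainder: your closed form $F_\beta(\hat{\lambda}_\beta)-F_\beta(\lambda_\beta)\propto(\hat{\lambda}_\beta-\lambda_\beta)^2/\bigl[\sqrt{\hat{\lambda}_\beta}(\sqrt{\hat{\lambda}_\beta}+\sqrt{\lambda_\beta})^2\bigr]$ makes explicit that the theorem's equality holds only to leading order in $\hat{\lambda}_\beta-\lambda_\beta$, and it quantifies exactly how the approximation degrades for large prediction errors and small $\lambda_\beta$ --- which is precisely the behavior the paper observes empirically in Fig.~\ref{Fig:MonteIncrease} but does not derive. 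Your honesty about the small-error regime is appropriate; the paper's proof silently drops the $O\bigl((\hat{R}^{*}_{\beta}-R^{*}_{\beta})^3\bigr)$ and $O\bigl((\hat{\lambda}_{\beta}-\lambda_{\beta})^2\bigr)$ terms and states the result as an exact equality, so your version is, if anything, the more rigorous of the two.
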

\begin{proof}

Since ${R}^{*}_{\beta}$ minimizes $\Phi_{\rm{st},\beta}(t)$, ${\partial \Phi_{\rm{st},\beta}(t)}/{\partial R_{\beta}}|_{{R}^{*}_{\beta}}=0$ can be derived. Using Taylor series expansion at ${R}^{*}_{\beta}$, $\hat{\Phi}_{\rm{st},\beta}(t)$ can be rewritten as
\begin{equation}\label{equ:TaylorPhi}
\begin{split}
\hat{\Phi}_{\rm{st},\beta}(t) = \Phi^*_{\rm{st},\beta}(t) + \frac{1}{2}\frac{\partial^2 \Phi_{\rm{st},\beta}(t)}{\partial^2 R_{\beta}}\Big|_{{R}^{*}_{\beta}} \left[ \hat{R}^{*}_{\beta} - {R}^{*}_{\beta} \right]^2 + O\left[ \left( \hat{R}^{*}_{\beta} - {R}^{*}_{\beta} \right)^3 \right].
\end{split}
\end{equation}
Similarly, $\hat{R}^{*}_{\beta}(t)$ can be written as
\begin{equation}\label{equ:TaylorR}
\hat{R}^{*}_{\beta}(t) = {R}^{*}_{\beta} + \frac{\partial {R}_{\beta}}{\partial \lambda_{\beta}}\Big|_{\lambda_{\beta}}\left( \hat{\lambda}_{\beta} - \lambda_{\beta}\right) + O\left( \left( \hat{\lambda}_{\beta} - \lambda_{\beta}\right)^2 \right). 
\end{equation}
Substituting $\frac{\partial^2 \Phi_{\rm{st},\beta}(t)}{\partial^2 R_{\beta}(t)}$ and $\frac{\partial {R}_{\beta}(t)}{\partial \lambda_{\beta}(t)}$ into  \eqref{equ:TaylorPhi} and \eqref{equ:TaylorR}, one has
\begin{equation}\label{equ:biasAndVarOnStatic}
\begin{split}
&\mathbb{E}\left[\hat{\Phi}_{\rm{st},\beta}(t) - \Phi_{\rm{st},\beta}(t) \right] = \Lambda_\beta(t) \mathbb{E}\left[\left(\hat{\lambda}_{\beta}-\lambda_{\beta}\right)^2\right],
\end{split}
\end{equation}
where $\Lambda_\beta(t)$ is given by \eqref{equ:EigenValueOfSubregion} and $\mathbb{E}\left[\left(\hat{\lambda}_{\beta}-\lambda_{\beta}\right)^2\right]$ is the generalization error defined as\cite{goodfellow2016deep}
\begin{equation}\label{equ:MSE_biasAndVar}
\begin{split}
\xi_\beta(t) \triangleq \mathbb{E}\left[ \left( \hat{\lambda}_{\beta} - \lambda_{\beta} \right)^2 \right]&= \mathbb{E}\left\{\left[ \left( \hat{\lambda}_{\beta} - \mathbb{E}(\hat{\lambda}_{\beta}) \right) + \left( \mathbb{E}(\hat{\lambda}_{\beta}) - \lambda_{\beta} \right) \right]^2\right\}\\
&=\text{var}(\hat{\lambda}_{\beta}) + \left[\mathbb{E}(\hat{\lambda}_{\beta}) - \lambda_{\beta}\right]^2\\
&=\text{var}(\hat{\lambda}_{\beta}) + \text{bias}^2(\lambda_{\beta}).
\end{split}
\end{equation}
$\mathbb{E}(\cdot)$ is the expected value of $(\cdot)$. Substituting \eqref{equ:MSE_biasAndVar} into  \eqref{equ:biasAndVarOnStatic}, \eqref{equ:avgIncreasePhi} can be proved.
\end{proof}

Following Theorem \ref{thm:biasAndVarOnStatic}, one can see that the generalization error of pattern formation systems proportionally contribute to the expected value of increased static UAV-RF. Also, observing the eigenvalue $\Lambda_\beta(t)$ shown in \eqref{equ:EigenValueOfSubregion}, we conclude that in subregions with small $\lambda_\beta$, $\Delta \Phi_{\rm{st},\beta}(t)$ is more sensitive to generalization error. That is, when the time-varying density of ground users in one specific subregion is low, the generalization error should be strictly restrained.

\subsection{Bounds on Generalization Error and Sampling Number}

Discussing the upper-bound of generalization error and the corresponding lower-bound of sampling number under specific constraints is one of the core problems in the literature of machine learning, since a large sample size leads to a lot of work\cite{goodfellow2016deep,dobrusin1958statistical}. Therefore, we consider minimizing the total sampling number of system under the constraint of maximal static UAV-RF increment of considered area.

According to Vapnik-Chervonenkis (VC) theorem\cite{vapnik2013nature}, the generalization error convergences with respect to increasing sample size, i.e.
\begin{equation}\label{equ:VCDim}
\xi_\beta(t) \leq \hat{\xi}_\beta(t) + \varepsilon(d,N_{\rm{s},\beta}(t),\delta)
\end{equation}
holds true with probability at least $1-\delta$, where
\begin{equation}
\varepsilon(d,N_{\rm{s},\beta}(t),\delta) = \sqrt{\frac{1}{2N_{\rm{s},\beta}(t)}\left( \log d + \log\frac{1}{\delta} \right)},
\end{equation}
$\hat{\xi}_\beta(t)$ is the training error, $d$ is the volume of hypothesis space and $\delta \in (0,1)$. $N_{\rm{s},\beta}(t)$ is the sampling number in the $\beta$-th subregion at time $t$. Considering the fact that training error is determined by the capacity of learning algorithm\cite{goodfellow2016deep}, we let $\xi_{\max}$ be the maximal training error and  \eqref{equ:VCDim} can be rewritten as
\begin{equation}\label{equ:VCDimUB}
\xi_\beta(t) \leq \xi_{\max} + \varepsilon\left(d,N_{\beta,\rm{s}}(t),\delta\right).
\end{equation}


Denote the maximal tolerable static UAV-RF increment of considered area as $\Delta \Phi_{\rm{st},\max}(t)$. Then, the optimization problem on minimizing the sampling number of considered area can be expressed as follows.
\begin{flalign}\label{Equ:MinSamplingNum}
\mathcal{P}_3: \,\,\min_{\left\{\xi_\beta(t)\right\}} \,\,\, &\sum_{\beta=1}^{\kappa} N_{\beta,\rm{s}}(t)\\
\textrm{s.t.}\,\,\,&\xi_\beta(t) \geq 0,\tag{\theequation a}\label{equ:SNOptimizeProblemConstraintsa}\\
&  \sum_{\beta=1}^{\kappa} \Delta\Phi_{\rm{st},\beta}(t) \leq \Delta \Phi_{\rm{st},\max}(t). \tag{\theequation b}\label{equ:SNOptimizeProblemConstraintsb}
\end{flalign}
\eqref{equ:SNOptimizeProblemConstraintsa} denotes the non-negativity of generalization error, and \eqref{equ:SNOptimizeProblemConstraintsb} shows the upper-bound of the average increment of static UAV-RF in considered area at time $t$. By solving $\mathcal{P}_3$, the minimal sampling number and corresponding maximal generalization error of each subregion in considered area are presented as Proposition \ref{lem:minSampleNumber}.
\begin{prop}\label{lem:minSampleNumber}
With maximal tolerable increment of static UAV-RF in considered area $\Delta \Phi_{\rm{st},\max}(t)$, the minimal sampling number in the $\beta$-th subregion at $t$ is 
\begin{equation}\label{equ:SNOfBeta}
N_{\beta,\rm{s}}(t) = \omega^2 \Lambda_\beta^2(t) \left( \log d - \log \delta \right)/8,
\end{equation}
where
\begin{equation}\label{equ:omega}
\omega = \frac{2\kappa}{\Delta \Phi_{\rm{st},\max}(t) - \xi_{\max}\sum_{\beta=1}^{\kappa}\Lambda_\beta(t)}.
\end{equation}
Correspondingly, the maximum generalization error of pattern formation system is
\begin{equation}\label{equ:MinGeneralizationErr}
\xi_\beta(t) \leq \xi_{\max} + \frac{2}{\omega \Lambda_\beta(t)}.
\end{equation}
\end{prop}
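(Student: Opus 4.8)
The plan is to solve $\mathcal{P}_3$ by a Lagrangian/KKT argument, exploiting the fact that the objective and the constraint are both monotone in the vector $\{\xi_\beta(t)\}$. First I would substitute the explicit dependence of the increment on the generalization error: by Theorem \ref{thm:biasAndVarOnStatic}, $\Delta\Phi_{\rm{st},\beta}(t) = \Lambda_\beta(t)\,\xi_\beta(t)$, so constraint \eqref{equ:SNOptimizeProblemConstraintsb} becomes the single linear inequality $\sum_{\beta}\Lambda_\beta(t)\xi_\beta(t)\le\Delta\Phi_{\rm{st},\max}(t)$. Next I would invert the VC bound \eqref{equ:VCDimUB}: to guarantee $\xi_\beta(t)\le\xi_{\max}+\varepsilon(d,N_{\beta,\rm{s}}(t),\delta)$, and since a larger sample size only tightens the bound, the minimal $N_{\beta,\rm{s}}(t)$ consistent with a target error level $\xi_\beta(t)$ is obtained by setting $\varepsilon(d,N_{\beta,\rm{s}}(t),\delta)=\xi_\beta(t)-\xi_{\max}$, i.e. solving $\tfrac{1}{2N_{\beta,\rm{s}}(t)}(\log d+\log\tfrac1\delta)=(\xi_\beta(t)-\xi_{\max})^2$ for $N_{\beta,\rm{s}}(t)$. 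This turns the objective $\sum_\beta N_{\beta,\rm{s}}(t)$ into an explicit decreasing function of each $\xi_\beta(t)$, so the problem becomes: maximize a separable decreasing sum subject to one linear budget constraint $\sum_\beta\Lambda_\beta(t)\xi_\beta(t)\le\Delta\Phi_{\rm{st},\max}(t)$ and $\xi_\beta(t)\ge0$ (with the implicit requirement $\xi_\beta(t)\ge\xi_{\max}$ for the inversion to make sense, so the effective variable is $\xi_\beta(t)-\xi_{\max}\ge0$).

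The core step is the stationarity condition. I would form the Lagrangian $\mathcal{L}=\sum_\beta \tfrac{\log d-\log\delta}{2(\xi_\beta-\xi_{\max})^2} + \omega'\big(\sum_\beta\Lambda_\beta\xi_\beta-\Delta\Phi_{\rm{st},\max}\big)$ and set $\partial\mathcal{L}/\partial\xi_\beta=0$, giving $-\,(\log d-\log\delta)/(\xi_\beta-\xi_{\max})^3 + \omega'\Lambda_\beta = 0$, hence $(\xi_\beta-\xi_{\max})^3 = (\log d-\log\delta)/(\omega'\Lambda_\beta)$. At the optimum the budget constraint is active (since the objective strictly decreases as any $\xi_\beta$ grows, one always wants to use the whole tolerance), so substituting back $\sum_\beta\Lambda_\beta\xi_\beta=\Delta\Phi_{\rm{st},\max}$ would in general give a messy cubic-root coupling. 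To recover the clean closed form \eqref{equ:SNOfBeta}–\eqref{equ:MinGeneralizationErr} I expect the intended argument is slightly different: the paper evidently allocates the tolerance so that each subregion's \emph{excess} error is the same multiple $2/(\omega\Lambda_\beta(t))$, i.e. the per-subregion margin $\varepsilon(d,N_{\beta,\rm{s}}(t),\delta)=2/(\omega\Lambda_\beta(t))$, which makes $\Lambda_\beta(t)\,\varepsilon = 2/\omega$ identical across $\beta$. Feeding this into the active constraint, $\sum_\beta\Lambda_\beta(\xi_{\max}+2/(\omega\Lambda_\beta)) = \xi_{\max}\sum_\beta\Lambda_\beta + 2\kappa/\omega = \Delta\Phi_{\rm{st},\max}$, and solving for $\omega$ yields exactly \eqref{equ:omega}. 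Then plugging $\xi_\beta-\xi_{\max}=2/(\omega\Lambda_\beta(t))$ into the inverted VC relation $N_{\beta,\rm{s}}(t)=(\log d-\log\delta)/\big(2(\xi_\beta-\xi_{\max})^2\big)$ gives $N_{\beta,\rm{s}}(t)=\omega^2\Lambda_\beta^2(t)(\log d-\log\delta)/8$, which is \eqref{equ:SNOfBeta}, and \eqref{equ:MinGeneralizationErr} is just the VC bound rewritten with this margin.

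The main obstacle I anticipate is justifying \emph{why} the equal-margin (rather than strictly KKT-stationary) allocation is the one the proposition reports — i.e. reconciling the stated answer with the genuine optimizer of $\mathcal{P}_3$. One clean way to close this gap is to observe that the KKT stationarity $(\xi_\beta-\xi_{\max})^3\propto 1/(\omega'\Lambda_\beta)$ would make $N_{\beta,\rm{s}}\propto\Lambda_\beta^{2/3}$, whereas \eqref{equ:SNOfBeta} has $N_{\beta,\rm{s}}\propto\Lambda_\beta^{2}$; so either the paper is using the surrogate VC upper bound $\xi_\beta\le\xi_{\max}+\varepsilon$ as the quantity to be budgeted and is further equalizing the \emph{contribution} $\Lambda_\beta\varepsilon$ as a design choice that is provably feasible (hence gives a valid \emph{sufficient} sampling number, not necessarily the unconstrained minimum), or there is an additional monotonicity/worst-case argument I would need to supply. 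In the write-up I would therefore (i) perform the VC inversion, (ii) state the equal-contribution allocation $\Lambda_\beta(t)\varepsilon(d,N_{\beta,\rm{s}}(t),\delta)=\mathrm{const}=2/\omega$, (iii) enforce the active budget to pin down $\omega$ via \eqref{equ:omega}, and (iv) back-substitute to obtain \eqref{equ:SNOfBeta} and \eqref{equ:MinGeneralizationErr}, flagging that this furnishes the minimal sampling number under the uniform-margin policy and that the budget is met with equality by construction.
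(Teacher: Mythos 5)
Your reconstruction is, in substance, the paper's own derivation. The paper forms the Lagrangian $\mathcal{L}=\sum_{\beta}N_{\beta,\rm{s}}(t)-\omega\bigl(\sum_{\beta}\Delta\Phi_{\rm{st},\beta}(t)-\Delta\Phi_{\rm{st},\max}(t)\bigr)$ with $\Delta\Phi_{\rm{st},\beta}(t)$ replaced by its VC surrogate $\Lambda_\beta(t)\bigl(\xi_{\max}+\varepsilon(d,N_{\beta,\rm{s}}(t),\delta)\bigr)$, asserts that $\partial\mathcal{L}/\partial N_{\beta,\rm{s}}(t)=0$ yields $2=\omega\Lambda_\beta(t)\sqrt{(\log d-\log\delta)/(2N_{\beta,\rm{s}}(t))}$ --- which is exactly your equal-margin condition $\Lambda_\beta(t)\,\varepsilon(d,N_{\beta,\rm{s}}(t),\delta)=2/\omega$ --- and then fixes $\omega$ by saturating the budget, i.e.\ your steps (ii)--(iv). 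Your algebra reproducing \eqref{equ:omega}, \eqref{equ:SNOfBeta} and \eqref{equ:MinGeneralizationErr} from that allocation is correct.

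The concern you flag is a genuine defect of the paper's proof, not of your proposal, and your KKT computation pinpoints it. Differentiating $\varepsilon(d,N,\delta)=\sqrt{(\log d-\log\delta)/(2N)}$ in $N$ gives $-\varepsilon/(2N)$, so the stationarity condition should read $2N_{\beta,\rm{s}}(t)=\omega\Lambda_\beta(t)\,\varepsilon(d,N_{\beta,\rm{s}}(t),\delta)$, i.e.\ $N_{\beta,\rm{s}}^{3}(t)=\omega^{2}\Lambda_\beta^{2}(t)(\log d-\log\delta)/8$; the paper's displayed condition drops a factor of $N_{\beta,\rm{s}}(t)$. The true optimizer of $\mathcal{P}_3$ therefore scales as $\Lambda_\beta^{2/3}(t)$, exactly as your stationarity analysis in the $\xi_\beta$ variables predicts, whereas \eqref{equ:SNOfBeta} scales as $\Lambda_\beta^{2}(t)$ and corresponds to the equal-contribution allocation $\Lambda_\beta(t)\,\varepsilon(d,N_{\beta,\rm{s}}(t),\delta)\equiv 2/\omega$. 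That allocation is feasible and meets the budget \eqref{equ:SNOptimizeProblemConstraintsb} with equality, so it furnishes a valid sufficient sampling rule, but it is not the minimizer of $\mathcal{P}_3$ as posed. Your proposed framing --- derive \eqref{equ:SNOfBeta}--\eqref{equ:MinGeneralizationErr} as the uniform-margin, budget-tight design and say so explicitly --- is the correct way to state what the proposition actually establishes; if one instead wants the genuine optimum, the conclusion of the proposition would have to be changed to the $\Lambda_\beta^{2/3}$ law.
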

\begin{proof}
The Lagrangian function of $\mathcal{P}_3$ is
\begin{equation}\label{equ:LagrangianFunction}
\begin{split}
\mathcal{L}\left(\sum_{\beta=1}^{\kappa} N_{\beta,\rm{s}}(t), \omega \right) = \sum_{\beta=1}^{\kappa} &N_{\beta,\rm{s}}(t) - \omega \left( \sum_{\beta=1}^{\kappa} \Delta \Phi_{\rm{st},\beta}(t) - \Delta \Phi_{\rm{st},\max}(t) \right).
\end{split}
\end{equation}
Let ${\partial \mathcal{L}\left(\sum_{\beta=1}^{\kappa} N_{\beta,\rm{s}}(t), \omega \right)}/{\partial N_{\beta,\rm{s}}(t)}=0$, we have
\begin{equation}
2 = \omega \Lambda_\beta(t) \sqrt{\left( \log d - \log \delta \right)/\left(2N_{\beta,\rm{s}}(t)\right)},
\end{equation}
which is equivalent to \eqref{equ:SNOfBeta}. Since $\omega$ is the positive value that subjects to $\sum_{\beta=1}^{\kappa} \Delta\Phi_{\rm{st},\beta}(t) = \Delta \Phi_{\rm{st},\max}(t)$, \eqref{equ:omega} can be derived. Also, one can obtain \eqref{equ:MinGeneralizationErr} by substituting \eqref{equ:SNOfBeta} into \eqref{equ:VCDimUB}.
\end{proof}
With Proposition \ref{lem:minSampleNumber}, we can conclude that in our considered area, subregions with high eigenvalue need large sampling number. Further, let $\kappa=1$, Proposition \ref{lem:minSampleNumber} implies that
\begin{equation}\label{equ:SampleNumberSingleRegion}
N_{\rm{s},1}(t) = \frac{\Lambda_\beta^2(t)\left( \log d - \log \delta \right)}{2\left( \Delta \Phi_{\rm{st},\max}(t) - \xi_{\max} \Lambda_\beta(t) \right)^2 },
\end{equation}
and
\begin{equation}
\xi_1(t) \leq {\Delta \Phi_{\rm{st},\max}(t)}/{\Lambda_\beta(t)}.
\end{equation}
This agrees with our previous result that $\Delta \Phi_{\rm{st},\beta}(t)$ in subregions with small $\lambda_\beta$ is more sensitive to generalization error, as shown in Theorem \ref{thm:biasAndVarOnStatic}.


\begin{table}[tbp]
\small
 \caption{Common parameters used in numerical analyses}\label{tab:ComParameters}
 \centering
 \begin{tabular}{ccccc}
  \toprule
  urban $(a,b,\eta_0,\eta_1)$ & dense urban $(a,b,\eta_0,\eta_1)$& suburban $(a,b,\eta_0,\eta_1)$& bandwidth $W$ & noise density $N_0$\\
  (9.61, 0.16, 1, 20) & (12.08, 0.11, 1.6, 23) & (4.88, 0.43, 0.1, 21) & 10 KHz & 5$\times10^{-15}$ W/Hz \\
  \hline
  \hline
  carrier frequency $f$ & BS coverage area $S_{\rm{bs}}$ & velocity $v_{\rm{h}}$, $v_{\rm{a}}$, $v_{\rm{d}}$ & data rate $C$  & considered area size\\
  2.4 GHz & $10^{4}$ $\rm{m}^2$ & 1 m/s & 10 Kbps & 1000 m $\times$ 1000 m\\
  \bottomrule
 \end{tabular}
\end{table}

\section{Numerical Results}\label{Sec:Simulations}
In this section, we present some numerical results to validate our theoretical founds. Besides, more insights on the effectiveness of proposed optimal placement strategy and the corresponding S-MGD based placement updating method are provided. The common parameters are listed in Table \ref{tab:ComParameters}\cite{bor2016efficient}. In our simulations, without specification, the simulated communication environment is Urban. The considered area is a square, where two rectangle-shaped subregions with equal area are investigated. A RSC is deployed at the center of considered area. Without loss of generality, let ${S_\beta}/{(\pi E_{\rm{b}})}=1~\rm{m}^2/\rm{J}$, $\beta=1,2$ and $P_{\rm{h}},P_{\rm{a}},P_{\rm{d}}=P_{\rm{m}}$, where $P_{\rm{m}}$ is the mobility power.

\subsection{Optimal Placement Strategy in Single Time-slot}

\begin{figure}
\centering{\begin{minipage}{1.0\linewidth}	
  \centerline{\includegraphics[width=8.0cm]{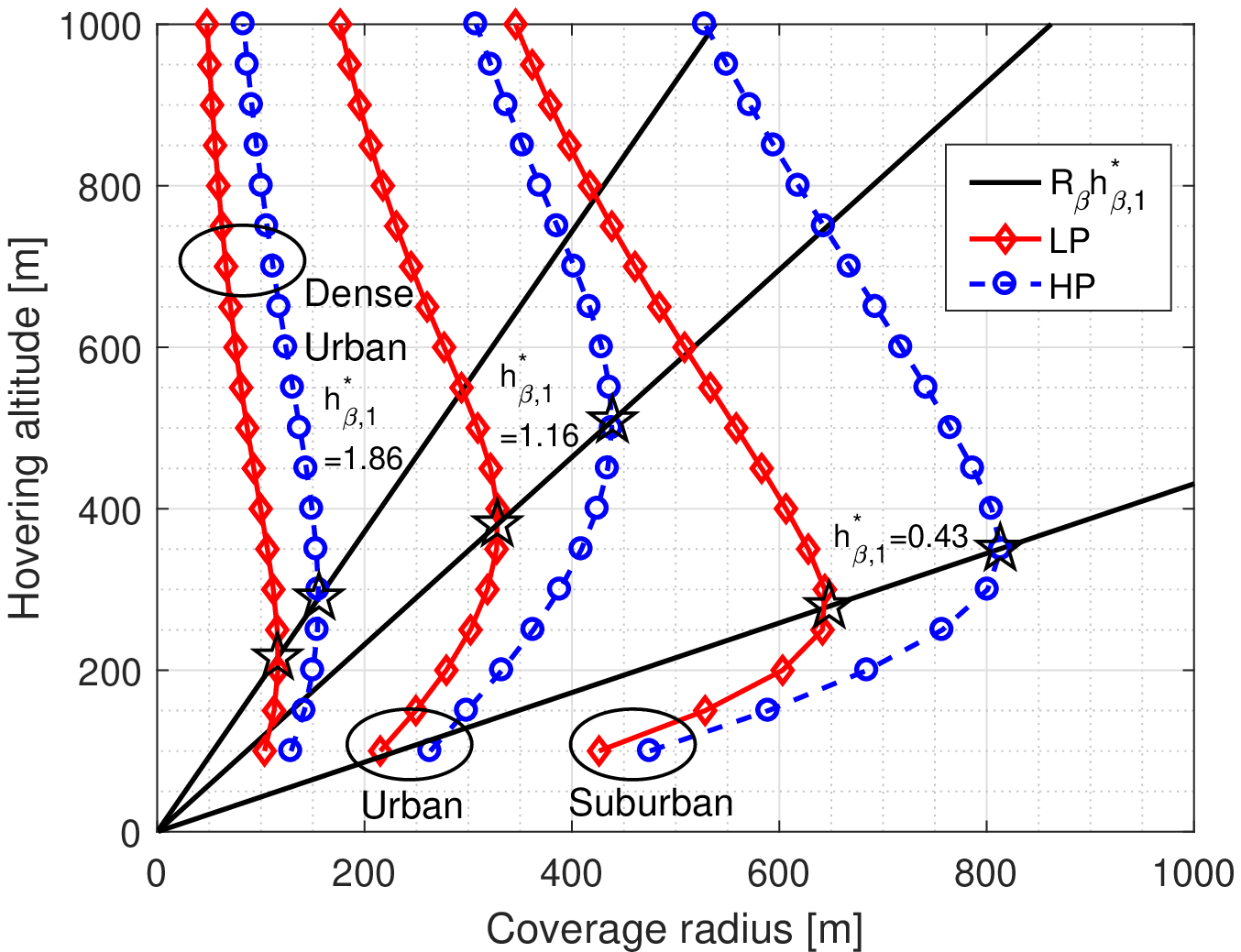}}
  \caption{The hovering altitudes versus $R_{\beta}$ in various environments. The optimal hovering altitude is marked by stars.} \label{Fig:HversusRWithFixedPower}
\end{minipage}}

\begin{minipage}{0.48\linewidth}
  \centerline{\includegraphics[width=8.0cm]{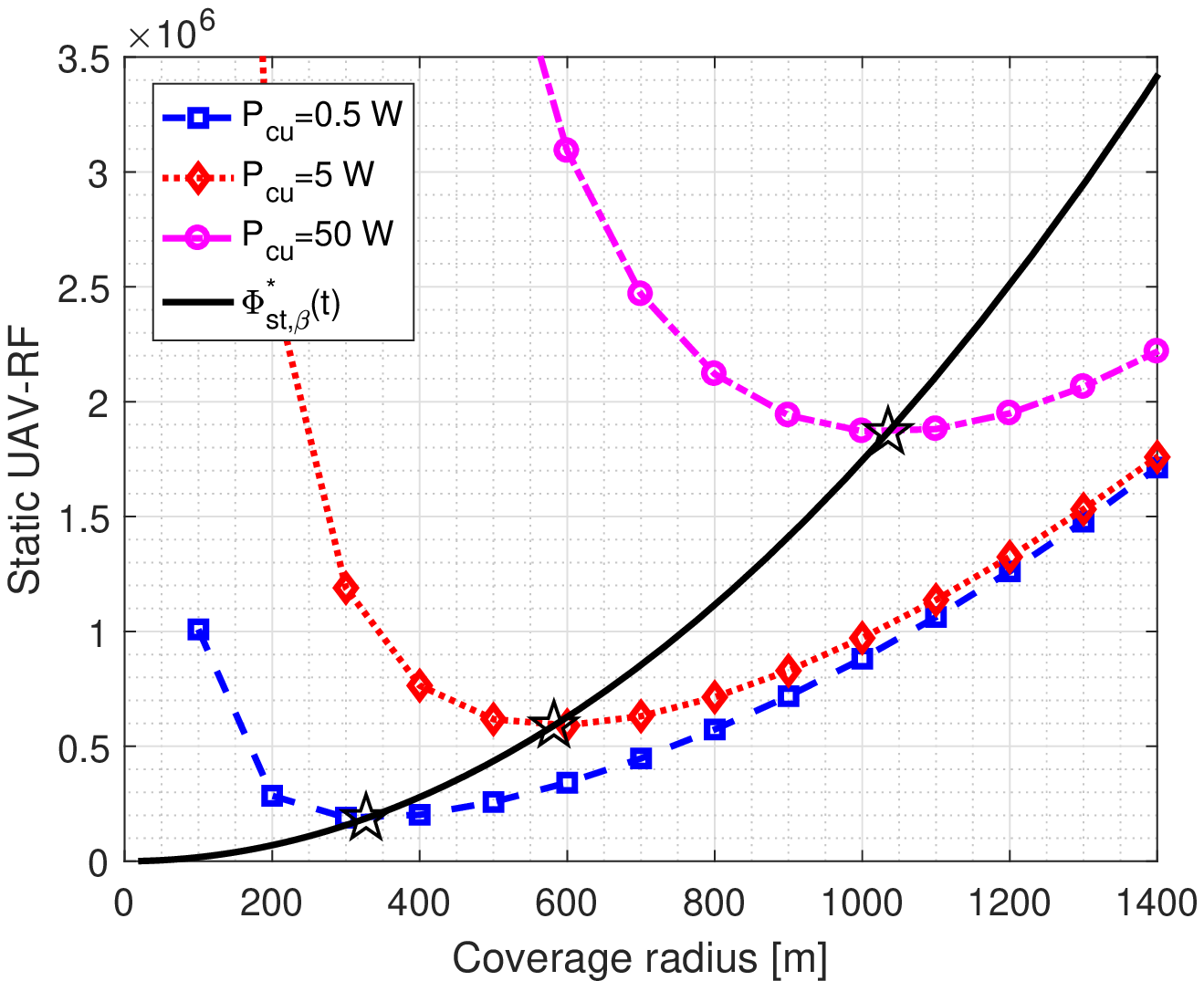}}
  \caption{The static UAV-RF versus coverage radius with various on-board circuit power. $\lambda_{\beta}$=0.1 /$\rm{m}^2$.} \label{Fig:PhiVersusR}
\end{minipage}
\hfill
\begin{minipage}{0.48\linewidth}
  \centerline{\includegraphics[width=8.0cm]{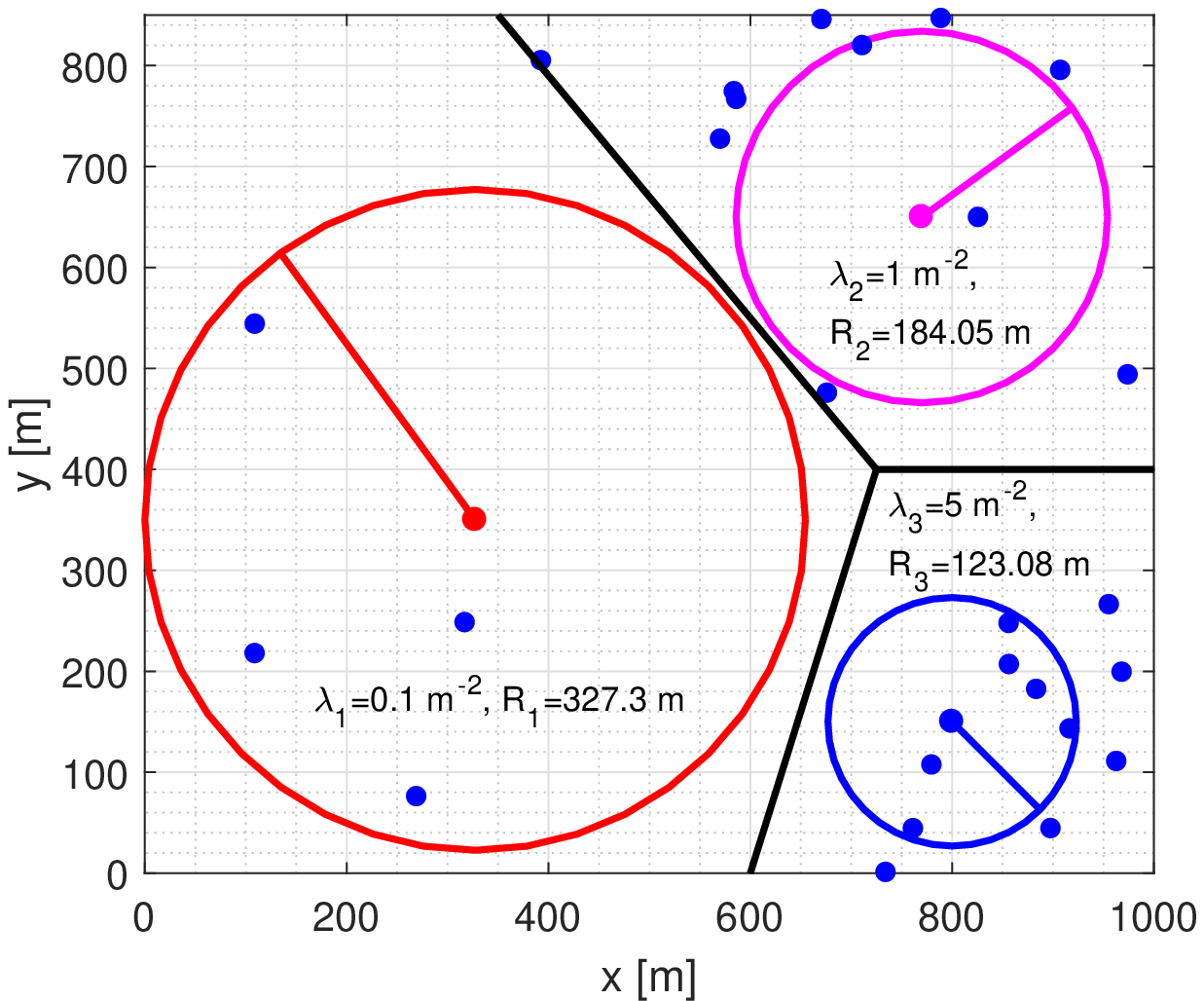}}
  \caption{An example of the coverage system under different densities of ground users. $P_{\rm{cu}}$=0.5 W.} \label{Fig:CoverageExample}
\end{minipage}
\end{figure}

Fig. \ref{Fig:HversusRWithFixedPower} depicts the hovering altitude versus coverage radius in various environments when the transmit power is fixed. In dense urban, urban and suburban environments, the red-solid lines correspond to transmit power 0.05 W, 0.5 W and 1.5 W. Similarly, the blue-dash lines correspond to 0.15 W, 1.5 W and 5 W. The simulated density of ground users is 0.1 /$\rm{m}^2$, and the optimal hovering altitude is marked by stars. Observing the blue-dash line in suburban, one can find that when the transmit power is fixed, the coverage radius achieves its maximum 810 m at $h_\beta=350$ m. In other words, when the coverage radius is 810 m, $h_\beta=350$ m is the optimal hovering altitude that minimizes the transmit power. The solid black lines depict the optimal hovering altitude with respect to $R_{\beta}$. It can be seen that the optimal hovering altitudes is proportional to the desired coverage radius, which verifies our theoretical results shown in Lemma \ref{Lem:ScaleOptimalAltitude}. In addition, it can be observed that in environment with high-rise buildings, the optimal hovering altitude is also high. This is because high hovering altitude can reduce shade effects from high-rise buildings, and hence the LOS probability is increased. In this way, the average path loss is decreased and transmit power is saved.

The static UAV-RF versus coverage radius with various on-board circuit power is depicted in Fig. \ref{Fig:PhiVersusR}. The density of ground users is 0.1 /$\rm{m}^2$, and the optimal coverage radii that minimize $\Phi_{\rm{st},\beta}(t)$ are marked by stars. When $P_{\rm{cu}}$=0.5 W, 5 W and 50 W, the simulated $R^*_{\beta}$=327.3 m, 582 m and 1035 m. As expected, high on-board circuit power leads to high static UAV-RF, which has been shown in  \eqref{Equ:InequalityOfPhi}. This indicates that restricting the on-board circuit power of UAV-BSs can effectively decrease the static UAV-RF. Also, it can be observed that the optimal coverage radius increases with respect to on-board circuit power. This is because when $P_{\rm{cu}}$ is high, large $R_{\beta}(t)$ can decrease the number of UAV-BSs, resulting in the reduction of the total consumed on-board circuit power of network.

Fig. \ref{Fig:CoverageExample} shows an example of coverage system under different densities of ground users, when $P_{\rm{cu}}$=0.5 W. The related user densities in three subregions are 0.1 $/\rm{m}^2$, 1 $/\rm{m}^2$ and 5 $/\rm{m}^2$, respectively. The corresponding theoretical coverage radii given by  \eqref{Equ:Optimal2DArrangement} are 327.3 m, 184.05 m and 123.08 m, which agree with the simulated results. To the hovering altitudes, they can be obtained with Lemma \ref{Lem:ScaleOptimalAltitude}. It can be observed that in cases with dense users, the optimal coverage is small. This is because small coverage radius can reduce the transmit power increased by high user density. Comparing the simulation results in considered subregions, we can conclude that our proposed optimal placement strategy can efficiently adjust to varying user densities, while minimizing the static UAV-RF.

\subsection{Optimal Placement Updating Method in Considered Duration}

To demonstrate the effectiveness of our proposed S-MGD method, we illustrate two robust placement updating methods for comparison. The first one is the 'Lazy' method, where UAV-BSs hold their placement in our considered duration. In contrast, The second one is the 'Diligent' method, where UAV-BSs update their positions at the beginning of each time-slot. In this way, the static UAV-RF of 'Diligent' method, denoted as $\Phi^{\rm{D}}_{\rm{st},\beta}(t)$, is always minimized. When take the energy cost on mobility into consideration, $\Phi^{\rm{D}}_{\rm{st},\beta}(t)$ can be denoted as the lower-bound of $\Phi_{\rm{dn},\beta}(t)$. Here, we take $\Phi^{\rm{D}}_{\rm{st},\beta}(t)$ as the base line to evaluate the effectiveness of our proposed method. In this subsection, $P_{\rm{cu}}=0.5$ W and $T$=24 h. 

Fig. \ref{Fig:UpdatePoints} depicts the updated coverage radii with our proposed S-MGD scheme, which reflects the placement of UAV-BSs. Clearly, the number of update epochs depends on the mobility power. Compare the cases with different mobility powers, one can observe that in cases with low mobility power, the coverage radii of UAV-BSs updates frequently. By contrast, in cases with high mobility power, UAV-BSs tend to keep their coverage radii for several time-slots. Specifically, when $P_{\rm{cu}}$=50 W, the coverage radii of UAV-BSs remain unchanged during $[0,T]$. This is because updating coverage radii may cost more energy than it saves.

\begin{figure*}[htbp]
\centering
\includegraphics[width=1\textwidth]{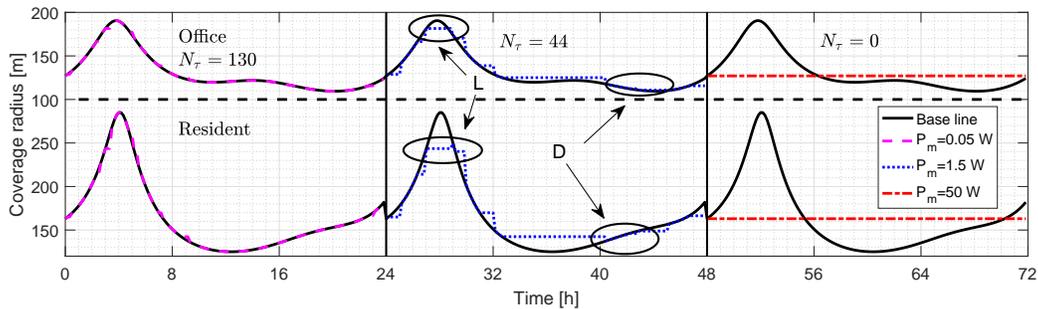}
\caption{The coverage radii of UAV-BSs in cases with different mobility powers. $P_{\rm{cu}}=0.5$ W and $T$=24 h.} \label{Fig:UpdatePoints}
\end{figure*}

In addition, Fig. \ref{Fig:UpdatePoints} shows that the placement updating frequency relies on the varying density of ground users. For example, in the case with $P_{\rm{m}}$=1.5 W, 'L' and 'D' labels the lazy and diligent part of our proposed S-MGD scheme. It can be seen that in lazy part, the placement of UAV-BSs remain unchanged. By contrast, in diligent part, the placement is updated in each time-slot. This is intuitive, because in lazy part, the optimal coverage radii in both considered subregions return to the initial values at the end of this part. Therefore, there is no need to update the placement. Also, in diligent part, the optimal coverage radii vary sharply. Thus, spending energy on updating the placement can save more energy cost on communication and on-board circuit. Simulation results in Fig. \ref{Fig:UpdatePoints} demonstrate that our proposed S-MGD based placement updating method can efficiently adjust to the varying user density and mobility power.

\begin{figure*}[htbp]
\centering
\includegraphics[width=1\textwidth]{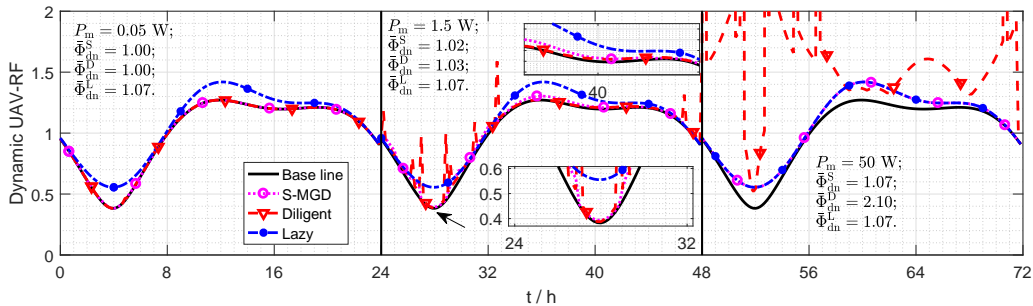}
\caption{The UAV-RF in cases with different mobility powers. $P_{\rm{cu}}=0.5$ W and $T$=24 h.} \label{Fig:CompareVariousRF}
\end{figure*}

Fig. \ref{Fig:CompareVariousRF} compares the dynamic UAV-RF of Diligent, Lazy and our proposed S-MGD based placement updating method. The simulated mobility power is 0.05 W, 1.5 W and 50 W. Fig. \ref{Fig:CompareVariousRF} shows that S-MGD always provides the most energy efficiency updating strategy for arbitrary mobility powers. Also, as expected, in cases with high and low mobility power, the proposed method degrades to 'Lazy' and 'Diligent' method, respectively. Observe the labeled average dynamic UAV-RF, we can see that compared with the worst updating method, our method can reduce the average dynamic UAV-RF by 7\%$\sim$96\%. 

In addition, since the 'Lazy' method holds the placement of UAV-BSs during $[0,T]$, the average dynamic UAV-RF remains unchanged for arbitrary mobility powers. Compared with the base line, the 'Lazy' updating method can provide satisfactory energy efficiency. However, Fig. \ref{Fig:AverageUAVRFVersusStartPoint} shows that the performance of 'Lazy' method is unstable and may vary with the initial time of our considered duration. By contrast, our method can stably provide near-optimal performance in arbitrary cases, regardless of the initial time of considered duration and the mobility powers.

\begin{figure}
\centering{\begin{minipage}{1.0\linewidth}	
  \centerline{\includegraphics[width=8.0cm]{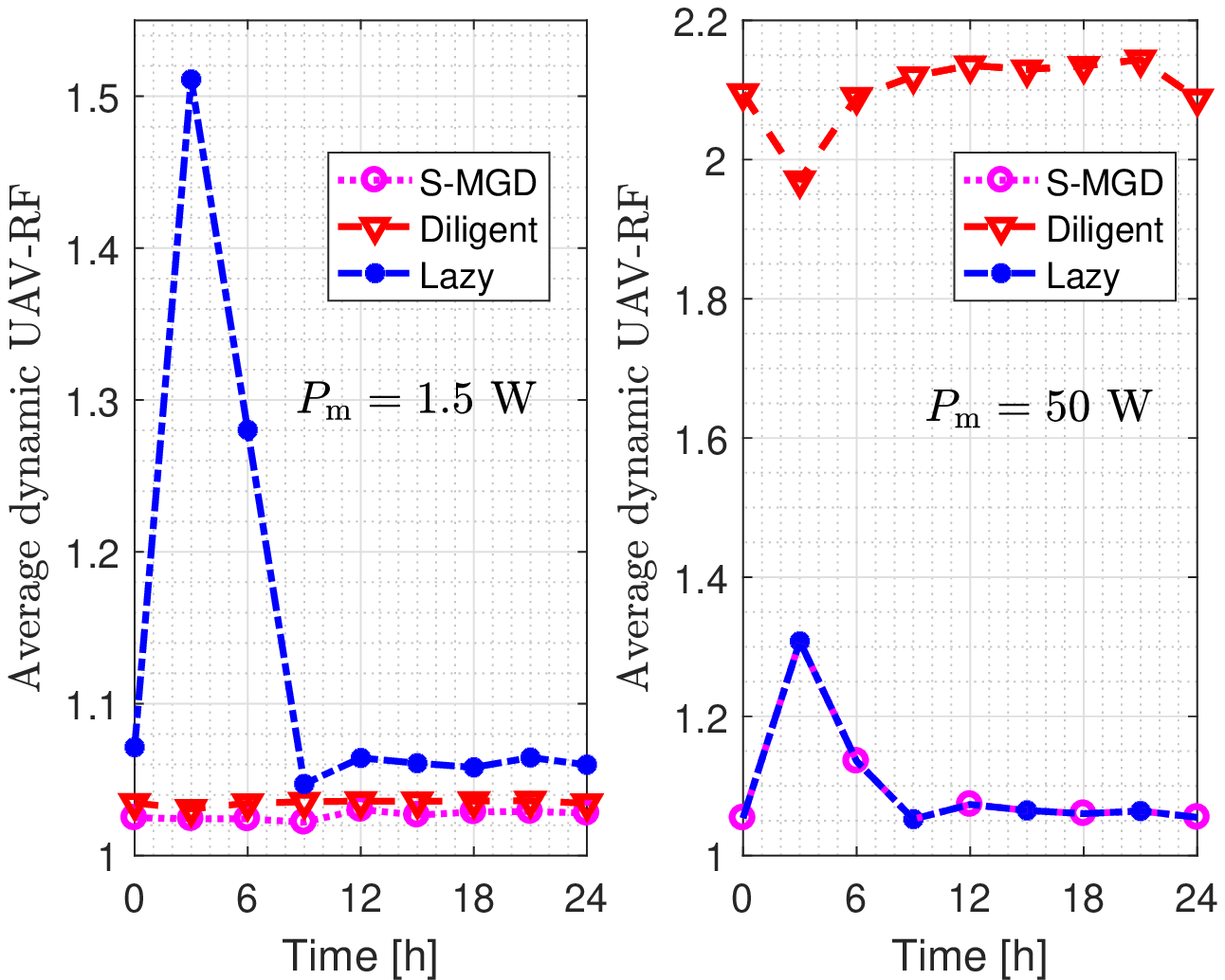}}
  \caption{The average dynamic UAV-RF of UAV-BSs versus different initial time.} \label{Fig:AverageUAVRFVersusStartPoint}
\end{minipage}}

\begin{minipage}{0.48\linewidth}
  \centerline{\includegraphics[width=8.0cm]{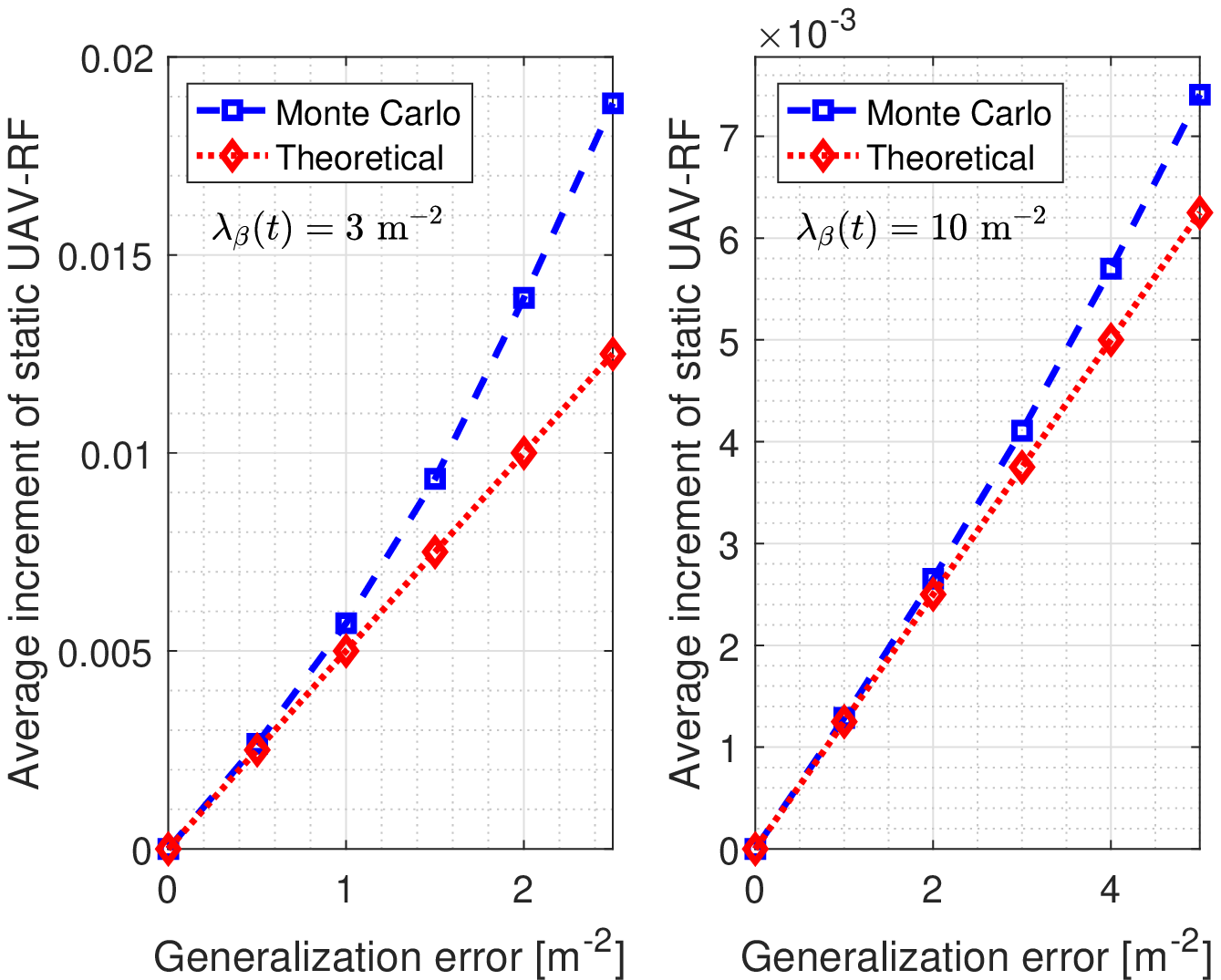}}
  \caption{The simulated and theoretical expected increase of static UAV-RF versus user density. Repetition number $10^6$.}\label{Fig:MonteIncrease}
\end{minipage}
\hfill
\begin{minipage}{0.48\linewidth}
  \centerline{\includegraphics[width=8.0cm]{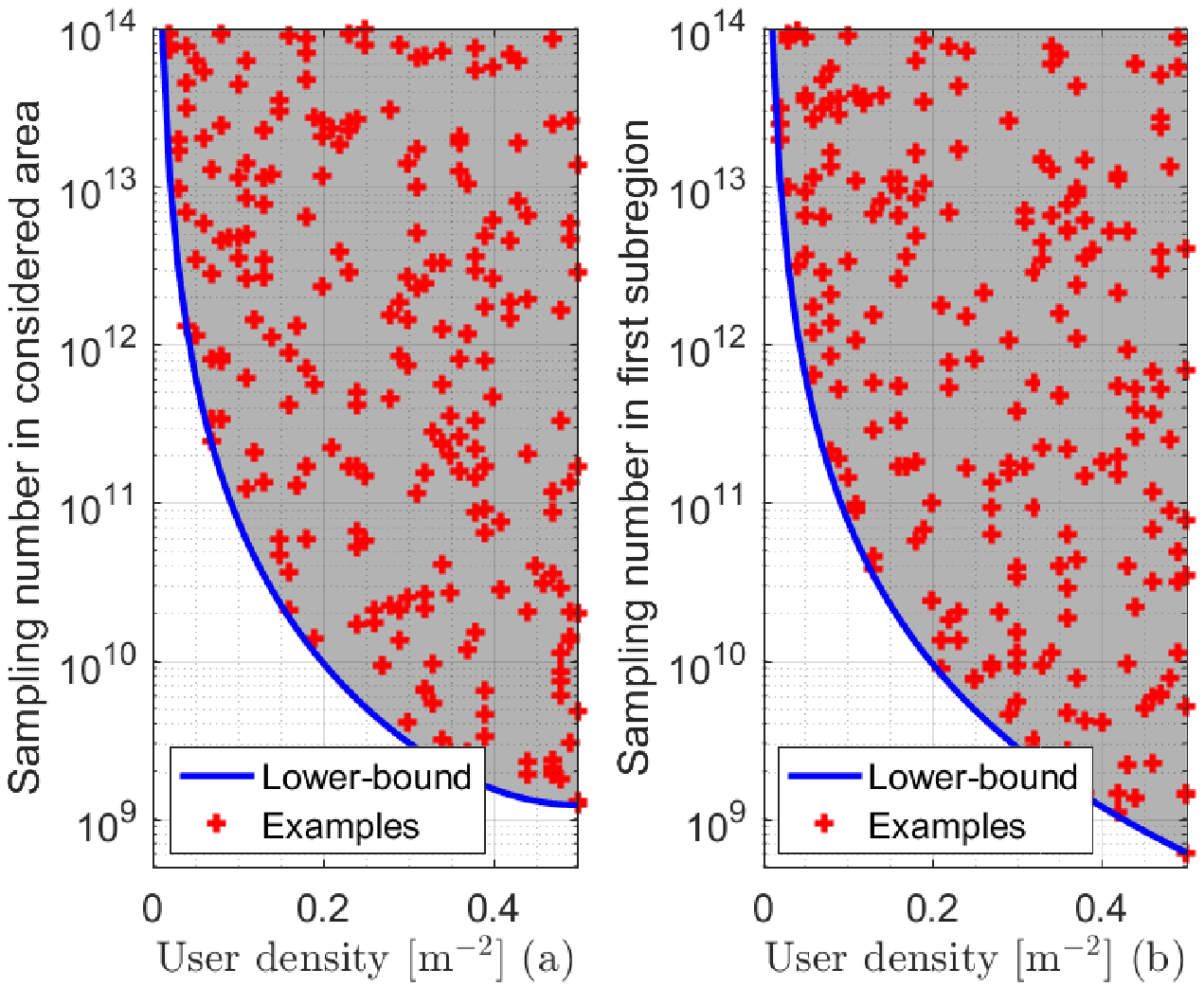}}
  \caption{The lower-bound of sampling numbers in considered area (a) and the first subregion (b) versus user density $\lambda_\beta$.} \label{Fig:MonteExamples}
\end{minipage}
\end{figure}

\subsection{Insights on Learning Density Patterns}

Fig. \ref{Fig:MonteIncrease} compares the simulated and theoretical $\Delta \Phi_{\rm{st},\beta}(t)$ with Monte-Carlo method, where the repetition number is $10^6$. For the simplicity of comparison, the depicted $\Delta \Phi_{\rm{st},\beta}(t)$ is normalized by $\Phi_{\rm{st},\beta}(t)$. Obviously, our theoretical results shown in Theorem \ref{thm:biasAndVarOnStatic} are sufficiently explicit in most of the simulated cases. However, the difference between our derived theoretical results and the simulated $\Delta \Phi_{\rm{st},\beta}(t)$ grows with respect to generalization error. This is because Proposition \ref{lem:minSampleNumber} is derived with Taylor series expansion, which lost accuracy for large $\hat{\lambda}_{\beta} - \lambda_{\beta}$. In addition, compare the results shown in cases with $\lambda_{\beta}=$3 $\rm{m}^{-2}$ and 10 $\rm{m}^{-2}$, we can see that our theoretical results in Theorem \ref{thm:biasAndVarOnStatic} are more accurate when user density is high. This is because as previous illustrated, $\Delta \Phi_{\rm{st},\beta}(t)$ in subregions with low user density are more sensitive to generalization error.

The lower-bound of sampling numbers in considered area and subregion versus user density $\lambda_{\beta}$ are depicted in Fig. \ref{Fig:MonteExamples}(a) and Fig. \ref{Fig:MonteExamples}(b), respectively. In simulations, $\Delta \Phi_{\rm{st},\max}(t)$=10, and without loss of generality, let $\xi_{\max}$=0. For our considered subregions, the simulated density of users is $\lambda_{\beta}$ and $1-\lambda_{\beta}$, where $\lambda_{\beta} \in [0,1/2]$. Also, some sampling numbers satisfying  \eqref{equ:SNOptimizeProblemConstraintsb} are labeled with Monte-Carlo method. It can be observed that the sampling numbers are lower-bounded by  \eqref{equ:SNOfBeta}, which verifies Proposition \ref{lem:minSampleNumber}. Also, Fig. \ref{Fig:MonteExamples}(b) shows that cases with low user density require large sample sets. This is consistent with the intuition that in cases with lower user densities, $\Delta \Phi_{\rm{st},\beta}(t)$ is more sensitive to generalization error. This manifests that cases with low user densities may effect the energy performance of network greatly and need to be taken into consideration carefully.

\section{Conclusion}\label{Sec:Conclusion}

This paper focused on the placement of UAV-BSs for the downlink system, where the powers relevant to signal transmitting, on-board circuit and the potential mobility of UAVs as well as the density of ground users are taken into account. To track the instable and non-ergodic time-varying nature of user density, we first proposed a pattern formation based framework in a machine learning manner. Then, to minimize the UAV-RF of network, the optimal placement of UAV-BSs was analyzed. In the static case with one time-slot, we proved that the optimal placement is achieved when the transmit power of UAV-BSs equals their on-board circuit power. In addition, in the dynamic case with a multiple time-slot duration, we showed that the optimal placement updating problem is an nonlinear dynamic programming coupled with an inherent integer linear programming. Considering the NP-hardness of original problem, we proposed a S-MGD based Pareto-optimal placement updating method with a polynomial time complexity. Simulation results showed that it can stably provide near-optimal performance in terms of minimum UAV-RF. Finally, we proved that the generalization errors of learned density patterns proportionally contribute to the increment of UAV-RF in each subregion. Simulation and theoretical results demonstrated that large sample sets are needed in regions with large area, high-rise buildings or low user density. 

Further, it is seen from Remark \ref{Rem:Complexity} that the computational complexity of proposed S-MGD method is greatly affected by the number of UAV-BSs. In previous mentioned regions, the computational complexity can be extremely high. Therefore, to reduce the complexity, multiple RSCs and proper partition to considered region are needed. In addition, the effect of pattern formation accuracy on the dynamic UAV-RF should be discussed. These issues shall be considered in future works.






\end{document}